\documentclass[10pt, a4paper]{article}

\usepackage{prooftree}
\usepackage{amssymb}
\usepackage{graphicx}
\usepackage{amsthm}
\usepackage[latin1]{inputenc}

\newcommand\ctx{\textsc{ctx}} 
\newcommand\lto{\mathbin{\backslash}}
\newcommand\lfrom{\mathbin{/}}
\newcommand\llto{\mathbin{\multimap}}
\newcommand{\resp}{\textit{resp.}~}
\newcommand\seq\vdash 
\def\N{\mbox{I\hspace{-.15em}N}}
\newcommand{\tuple}{\vec{x}}
\newcommand{\tuplebis}{\vec{y}}

\newtheorem{propri}{Property}
\newtheorem{defi}{Definition}
\newtheorem{prop}{Proposition}
\newtheorem{theo}{Theoreme}
\newtheorem{lem}{Lemma}

\title{Normalization and sub-formula property for Lambek with product and PCMLL -- Partially Commutative Multiplicative Linear Logic\label{mixtenorme}}
\author{Maxime Amblard, Christian Retor\'e\\
Loria (UMR 7503) Université de Lorraine, CNRS, INRIA Nancy Grand-Est\\
Labri -(UMR 5800) INRIA Bordeaux Sud-Ouest - Universit\'e de Bordeaux\\
amblard@loria.fr retore@labri.fr
}
\date{}
\begin{document}
\maketitle

\begin{verse}
\textbf{Abstract}

This paper establishes the normalisation of  natural deduction or lambda calculus formulation of 
Intuitionistic Non Commutative Logic --- which involves both commutative and non commutative connectives. This calculus first introduced by de Groote and as opposed to the classical version by Abrusci and Ruet admits a full entropy which allow order to be relaxed into any suborder. 
Our result also includes, as a special case, the normalisation of natural deduction the Lambek calculus with product, which is unsurprising but yet unproved. Regarding Intuitionistic Non Commutative Logic with full entropy does not have up to now a proof net syntax, and that for linguistic applications, sequent calculi which are only \emph{more or less} equivalent to natural deduction, are not convenient because they lack the standard Curry-Howard isomorphism. 
\end{verse}

\textbf{Keys words :}
Logic ; Intuitionistic Non Commutative Logic ; Lambek calculus ; normalisation

\newpage
\tableofcontents
\newpage

We first define partially commutative linear logic as a natural deduction system, Lambek calculus with product being an easily identified fragment, \cite{Ret05lcg}. 
Next we sketch its use for a logical account of minimalist grammars. 

Thereafter we prove normalisation for Lambek calculus with product which is a folklore result, assumed by the community but without any known proof. This enlightens the treatment of the product elimination rules to be intensively used for the complete calculus  in combination with order rules. 

Finally we give the proof of normalisation for the complete partially commutative calculus of de Groote in natural deduction. In addition to the difficulty of commutative conversion for product elimination rules, we are also faced with commutative conversions regarding the entropy rule acting on partial orders, \cite{AM07th}. 

We conclude with a discussion of normal form which we already know to be unique for the Lambek calculus with product and to be "nearly unique"  for the complete partially commutative calculus
(they are all obtained by permuting commutative-product elimination among sequences of commutative-product eliminations). But this would have made the paper too lengthy.

We begin by introducing and studying the properties of mixed logic, introduced in \cite{Gro96}, \cite{GR96} and \cite{BGR97}. 
This type of logic stems from the work of G. Gentzen, \cite{Gen34, Gen34b}, \cite{Gen36}, depending on the direction given by Lambek, \cite{Lam58}. For a detailed presentation of the evolution of these theories see \cite{GLT88}, \cite{Girard97un}.
Then, we present a typical use of this calculus for computational linguistic.
The two last sections present the normalization for Lambek with product and prove that the sub-formula property holds and the same for  PCMLL.


\section{Presentation}

Non-commutative logics arise naturally both in the mathematical perspective and in the modelling of some real world phenomena. 
Mathematically non commutativity is a natural both from the truth valued semantics viewpoint (phase semantics, based on monoids which can be non commutative) 
and from a syntactical one (sequent calculus with sequences rather than sets of formulae, proof nets which can have well bracketed axiom links). 
Non commutativity also appears from real world application such as concurrency theory, like concurrent execution of Petri net,  
and in our favourite application, computational linguistic, and this goes back  to the fifties and the apparition of the Lambek calculus. 
We first give a brief presentation of non commutative logics and then stress their interest for concurrency and computational linguistics. 

\subsection{Non commutative linear logics}

Linear logic \cite{Girard87} offered a logical view of the Lambek calculus \cite{Lam58} and non commutative calculi. 
During many years, the difficulty was to integrate commutative connective and non commutative connectives. A first solution, without a term calculus, was Pomset Logic, now studied with extended sequent calculi callled Calculus of Structures \cite{Guglielmi}.

Another kind of calculus was introduced as a sequent calculus by de Groote in \cite{Gro96}, which has to be intuitionistic to work neatly. It  
consists in a superposition of the Lambek calculus 
(non commutative) and of Intuitionnistic Linear Logic (commutative). For making a distinction bewteen the two connectives it is necessary that the context includes two different commas mimicking the conjunctions, one being commutative and the other being non commutative. Hence we deal with series-parallel partial orders over multisets of formulae as sequent right hand side. Let us  write $(...,...)$ for the parallel composition and $\langle...;...\rangle$ for the non commutative: hence 
$\langle\{a,b\};\{c,d\}\rangle$ stands for the finite partial order $a{<}c, b{<}c, a{<}d, b{<}d$. 
Of course we would like the two conjunctions to be related. Either the commutative product is stronger than the non commutative one, or the other way round. Surprisingly, the two options work as well, provided one direction is fixed once and for all. This relationship between the two products results from a structural rule modifying the order. 

Now a difference should be underlined between the Abrusci-Ruet classical calculus and the intuitionistic one of de Groote's and concern precisely the order rule. 
The Abrusci-Ruet definitely has an intuitionistic version by limiting it  to sequents with a single formula on the right and intuitionistic connectives, namely implications and conjunctions \cite{AR99apal}. 
But there is an important difference with de Groote's calculus: what can be the order rule stigmatising the relation between the two conjunctions? 

\begin{center}
\prooftree
	\Gamma\makebox{ ordered by $I$ } \vdash C 
	\justifies 
	\Gamma\makebox{ ordered by $J$ }\vdash C
\endprooftree
\end{center}

In de Groote's calculus $J$ can be any order such that $J \subset I $  (as set of ordered pairs of formulae in $\Gamma$) while in Ruet $J$ can only be obtained by turning some  non commutative commas into commutative ones. This not equivalent to allowing as result any suborder as $J$. Indeed  Bechet de Groote and the second author  in \cite{BGR97} showed that four rewriting rules are needed to obtained all possible series-parallel partial suborders from some series parallel partial order. Here is a typical derivation that can be performed in de Groote's calculus and not in Ruet's. 

\begin{center}
\prooftree
	(\{a,b\};\{c,d\})\vdash (a\otimes b) \odot (c \otimes d)
	\justifies 
	\{(\{a,b\};\{c),d\} \vdash (a\otimes b) \odot (c \otimes d)
\endprooftree
\end{center}

Abrusci-Ruet calculus admit a proof net syntax, which can be restricted to the intuitionistic case. 
Regarding the more flexible de Groote calculus, there neither exists proof nets, nor natural deduction:
it only exists a sequent calculus which has been prove to enjoy cut-elimination by semantical means in \cite{Gro96} and by a proof-theoretical method in \cite{Ret04petri}. 
This is what we propose in this paper, with normalisation. 
Firstly we thus obtain a calculus which is more convenient, because of the Curry-Howard isomorphism for computational linguistics, and secondly it is possibly a first step towards proof net syntax.

\subsection{Motivation for such calculi}

Non commutativity  in logic is rather natural in a resource consumption perspective. An hypothesis is viewed as a resource that can be use but then it is natural to think of how hypotheses are organised and accessible. As argued by Abrusci \cite{AbrusciJSL} and others, linearity is a mandatory condition for non commutativity. Observe that the first non commutative calculus, Lambek calculus which was invented long before linear logic, is a linear calculus, whose relation to oher logical system, in particular intuitionistic has only been understood after the invention of linear logic by Girard.

Concurrency, in which the order of the computations or of the resources matters, is of course a natural application. 
In the framework of proofs as programs, with normalisation as the computational process; it is rather the pomset logic and the calculus of structure which are of some use, because the order applies to  cuts that are the computations to be performed \cite{Ret97tlca,Guglielmi} But in the framework of proof search as computation, in the logic programming style of Miller, or in planning, the calculus studied in this paper with an order on hypotheses is of course important and process calculi can be encoded in non commutative calculi and this was the main motivation for Ruet's work.  The second author also provided a description of the parallel execution of a Petri net  in the calculus we are studying. It is a true concurrency approach,
where $a || b$ is not reduced to $a;b \oplus b;a$ (where $\oplus$ is the non deterministic choice). 
An execution according to a series parallel partial order corresponds to a proof in the partially commutative calculus that we study in this paper; in this order based approach of parallel computations any set of minimal transitions can be fired simultaneously. \cite{Ret04petri}

Our main motivation for such calculi is computational linguistics and grammar formalisms, 
in particular the description of mildly context formalisms. They are assumed to be large enough for natural language constructs, go beyond context-free languages, but admit polynomial parsing algorithms. 
We are especially fond of logical description of grammar classes as introduced by Lambek  because from a parse structure one is able to automatically compute the logical structure of the sentence. 
This especially true if the Lambek calculus or the partially commutative extensions that  we are considering are given in a natural deduction format. Indeed, the syntactic categories can be turned into semantic categories on two types, individuals ($e$) and truth values ($t$), in such a way that the proof in the Lambek calculus (the syntactic analysis) can be turned into a proof intuitionistic logic, that is a lambda term describing a logical formula in Church's style.

Lambek calculus is definitely too restrictive as a syntactic formalism, in particular it only describes context free languages, and many common syntactic constructs are difficult to model. This is the reason to use partially commutative calculi. In particular Lecomte and the second author managed to give a logical presentation \cite{LR01acl} of Stabler's minimalist grammars \cite{Sta97} in this the de Groote calculus, presented in natural deduction to obtain semantic representation of the parsed sentences. In parsing as deduction paradigm and for other applications as well it is quite important to have normalisation, unicity of the normal form: indeed the normal form is the structure of the analysed sentence, and normalisation ensures the coherence of the calculus. 
The algorithm of normalisation, easily extracted from the proof  is important as well:  
one define correct sentences as the ones such that some sequent can be proved, 
and both the parse structure and the semantic reading are obtained from the normal form.

\section{Partially Commutative Linear Logic}

\subsection{Order and formulae}
The sequent calculus  for Partially Commutative Linear Logic (PCIMLL) was introduced by de Groote in  \cite{Gro96}. It is a super imposition of between commutative intuitionistic multiplicative linear logic and the Lambek calculus with product, that is non commutative intuitionistic multiplicative linear logic. Formulae are defined from a set of propositional variables \textsc{p}, by the commutative conjunction ($\otimes$) , the non commutative conjunction ($\odot$), the commutative implication ($\llto$), the two non commutative implications ($\lfrom $ and $\lto$): 
 $$\textsc{l}::= \textsc{p} \;|\; \textsc{l}\odot \textsc{l} \;|\; \textsc{l}\otimes \textsc{l} \; |\; \textsc{l}\lfrom \textsc{l} \; | \; \textsc{l}\lto \textsc{l} \;|\; \textsc{l} \llto \textsc{l}$$

 Left hand side are partially ordered multiset of formulae whose underlying order is series-parallel (sp), that is can be defined by two operation: disjoint union denoted by $(\_,\_)$ 
and conjoint union, denoted by $\langle \_;\_\rangle$ (the domain is the disjoint union of the two domain, and every formulae in the first component is before any formulae in the second 
component. They respect the following syntax:

  $$\ctx ::= \textsc{l} | \langle \ctx ; \ctx \rangle \;|\; (\ctx,\ctx )$$

For example, the context $\langle \langle B;(A\llto (B\lto (D\lfrom C), A) \rangle ; C\rangle$ denote the sp order, $Succ(B)=(A,A\llto (B\lto (D\lfrom C) )$,
$Succ(A)=Succ(A\llto (B\lto (D\lfrom C))= C$ where $Succ(X)$ is the immediate successor of $X$ and this function of the domain to parts of this domain determine a full finite order.

The term denoting an sp order is unique up to the commutativity of $(\_,\_)$  and to the associativity of both  $(\_,\_)$ and $\langle \_;\_\rangle$. The term notation is only a short hand, a convenient notation for sp order. That is even if the sp term are different, the left hand side of two sequents 
are considered as equal whenever they are equal as partially ordered multiset. 
 
Uppercase Greek letters are used for contexts.
An expression $\Gamma[]$ represents a context in which we distinguish a specific element $[*]$, where for an expression $\Gamma[\Delta]$ the element $[*]$ is replaced by the context $\Delta$. 
More details can be found in \cite{BGR97} and \cite{Ret04petri}.

Figure \ref{reglesmixtes} shows all rules of PCMLL. It uses the classical rules of commutative multiplicative intuitionistic linear logic and of non-commutative multiplicative intuitionistic linear logic. Both bring introductions and eliminations for its implicative(s) connectives and its product connective. Moreover, we use the axiom rule and an entropy rule ($\sqsubset$) which correspond to order's inclusions (weakness of order).

\begin{figure}
\begin{small}

\prooftree
	\Gamma \vdash A
	\quad \Delta \vdash  A \lto C
	\justifies
	<\Gamma; \Delta> \vdash C 
	\using
             [\lto_{e}]
         \thickness=0.07em
\endprooftree
\hfill
\prooftree
	\Delta \vdash  A \lfrom  C
	\quad \Gamma \vdash A
	\justifies
	< \Delta ; \Gamma> \vdash C 
	\using
             [\lfrom _{e}]
         \thickness=0.07em
\endprooftree
\hfill
\prooftree
	 \Gamma \vdash A
	\quad \Delta \vdash  A \llto C
	\justifies
	(\Gamma, \Delta) \vdash C 
	\using
             [\llto_{e}]
         \thickness=0.07em
\endprooftree

\bigskip 

\prooftree
	<A; \Gamma> \vdash C
	\justifies
	\Gamma \vdash A \lto C 
	\using
             [\lto_{i}]
         \thickness=0.07em
\endprooftree
\hfill
\prooftree
	<\Gamma; A> \vdash  C
	\justifies
	\Gamma \vdash C\lfrom A 
	\using
             [\lfrom _{i}]
         \thickness=0.07em
\endprooftree
\hfill
\prooftree
	(A, \Gamma ) \vdash  C
	\justifies
	\Gamma \vdash A \llto C 
	\using
             [\llto_{i}]
         \thickness=0.07em
\endprooftree

\bigskip 

\prooftree
	\Delta \vdash  A \odot B
	\quad \Gamma, <A; B>, \Gamma ' \vdash C
	\justifies
	\Gamma, \Delta , \Gamma '  \vdash C 
	\using
             [\odot_{e}]
         \thickness=0.07em
\endprooftree
\hfill 
\prooftree
	\Delta \vdash  A \otimes B
	\quad \Gamma, (A, B),  \Gamma ' \vdash C
	\justifies
	\Gamma, \Delta , \Gamma '  \vdash C 
	\using
             [\otimes_{e}]
         \thickness=0.07em
\endprooftree

\bigskip 

\prooftree
	\Delta \vdash A
	\quad \Gamma \vdash B
	\justifies
	<\Delta; \Gamma> \vdash A \odot B
	\using
             [\odot_{i}]
         \thickness=0.07em
\endprooftree
\hfill
\prooftree
	\Delta \vdash A
	\quad \Gamma \vdash B
	\justifies
	(\Delta, \Gamma) \vdash A \otimes B
	\using
             [\otimes_{i}]
         \thickness=0.07em
\endprooftree

\prooftree
	\justifies
	A \vdash A
	\using
             [axiom]
         \thickness=0.07em
\endprooftree
\hfill 
\prooftree
	\Gamma \vdash C
	\justifies
	\Gamma' \vdash C 
	\using
             [\mbox{entropy --- whenever }\Gamma'\sqsubset\Gamma]
         \thickness=0.07em
\endprooftree
\end{small} 
\caption{Rules of PCMLL\label{reglesmixtes}.}
\end{figure}

This calculus deserve some explanation and comments, especially the later rule: the entropy:

$\Gamma'\sqsubset\Gamma$ whenever these contexts that are sp partiallly ordered multisets of formulae 
have the same multiset domain $|\Gamma|=|\Gamma'|$, and whenever considering each occurrence of a formula as distinct if $A<B$ in $\Gamma'$  then $A<B$ in $\Gamma$ as well. 
The inclusion $\sqsubset$  of series parallel partial orders can be viewed as a rewriting relation (modulo commutativity and associativity)  on the sp term denoting them as shown in \cite{BGR97} -- see also \cite{Ret04petri} where the order rule is used the other way round, but as said in the introduction, it does not change normalisation. 

In the $\otimes$ and $\odot_e$ rules, A and B must be equivalent:$$\forall X \neq A,B
\left\{ \begin{array}{l}
X<A \Leftrightarrow X<B\\ 
X>A \Leftrightarrow X>B
\end{array} \right. $$ 

 In the $\otimes_e$ case A and B are equivalent and uncomparable while in the $\odot_e$ case, they are equivalent and $A<B$. In the conclusion of rules, they are replaced by the context which had produced $A\otimes B$(i.e. in figure \ref{reglesmixtes} $\Delta$).

Our formulation in a lambda calculus style of the elimination of the multiplicative linear logic conjunction is due Abramsky in \cite{Abramsky93cill} --- on the term side it corresponds to $let\; x=(u,v) in \; t(u,v)$ construct. 

Although we do have normalisation and sub-formula property (next sections) we do not have complicated rules of the kind  introduced in \cite{Negri02nd} for MLL. We assume her rules are motivated by other properties as well, and work for the complete  linear calculus with additive and exponentials.

\subsection{General definitions}

Lets  $\delta$ be a proof,  $S_j$ denotes an instance of a sequent in $\delta$, $|S_j|$ denotes for the corresponding sequent, and $|S_j|^r$ the formula in the right hand side of this sequent.

In a proof $\delta$, $B(S_0)$ is the \textbf{principal branch} outcome of an occurrence $S_0$ os a sequent, $|B(S_0)|$ is the smallest path which contains $S_0$ and closed by the following: 
\begin{enumerate} 
\item 
If $S\in B(S_0)$ is obtained by an unary rule $R$ os an occurence $S'$ of a sequent, then $S'\in B(S_0)$.
\item 
If $S\in B(S_0)$ is obtained by a product elimination $\odot_e$ (\resp $\otimes_e$), then 
the premise which brings the marker $S'$, with $|S'|=\Gamma[\langle A,B\rangle]\seq C$ (\resp  $|S'|=\Gamma[(A,B)]\seq C$) is also in $B(S_0)$.

\item 
If $S\in B(S_0)$ is obtained by an implicative elimination rule $\lto_e$ (\resp $\lfrom_e$, $\llto_e$), then the premisewhich brings the marher $S'$, with $|S'|= \Delta \vdash  A \lto C$ 
(\resp $|S'|= \Delta \vdash  A \lfrom C$, $|S'|= \Delta \vdash  A \llto C$) is also in $B(S_0)$.

\end{enumerate} 

For every path in a principal branch $B(S)$ from $S$ to $S_i$ such that $|S|^r = |S_i|^r$, if $|S|$ is an elimination rule and $|S_i|$ an introduction one, they are over the same formula and they are said \textbf{conjoined}

\section{A short example using PCMLL in Computational Linguistics}

Before we prove the normalisation and subformula property of PCMLL, let us illustrate briefly our use of this calculus for computational linguistics
--- for more details see \cite{LR01acl}. 
As said above, Lambek calculus is too restricted to describe natural language syntax, hence we try to logically describe richer formalisms, 
the logic being an easy way to extract semantic readings that are higher order logical formula describing the meaning of the sentence.
We focused on Stabler's minimalist grammars because they have good computational properties (polynomial parsing) 
are rather close to Lambek grammars and formalise Chomsky's recent minimalist program (hence they inherit from a good coverage of lot of syntactic constructs), \cite{Sta97, AM07th, ALR04}.

As Lambek grammars our categorial minimalist grammars are lexicalised: a lexicon maps every word into a PCMLL formula which describes its interaction with other words. 
We do not use all the rules of PCMLL but only some of them, grouped in derived rules. 
There are strings of words and variables labelling every formula of every sequent in the proof. 
We first derive natural deduction trees from axioms $x:A\vdash x:A$ and proper axioms $\vdash w:T$ 
when $T$ is the type of the word $w$ yielding a proof of $sentence:C$.

\begin{center}
MERGE 
\begin{small}
\prooftree
	\prooftree
		\tuple:\Delta \vdash w:A
		\quad \tuple':\Gamma \vdash w':A \lto C 
		\justifies
		\tuple:\Delta ;\tuple':\Gamma \vdash w w': C 
		\using
	             [\lto_e]
	         	\thickness=0.07em
	\endprooftree
	\justifies
	\tuple:\Delta ,\tuple'\Gamma \vdash w w':  C 
	\using
             [entropy]
         \thickness=0.07em
\endprooftree
\end{small}
\hfill or \hfill 
\begin{small}
\prooftree
	\prooftree
		\quad \tuple':\Gamma \vdash  w':C \lfrom  A
		\quad \tuple \tuple:\Delta \vdash w:A
		\justifies
		\tuple':\Gamma ;\tuple:\Delta \vdash w' w:C
		\using
	             [\lfrom _e]
	         	\thickness=0.07em
	\endprooftree
	\justifies
\tuple':\Gamma ;\tuple:\Delta \vdash w' w:C
	\using
             [entropy]
         \thickness=0.07em
\endprooftree
\end{small}
\end{center}

The \emph{merge} rule is completely similar to the residuation law of AB grammars of Lambek grammars. 

\begin{center}
MOVE 
\begin{small}
\prooftree
	\tuplebis:\Gamma \vdash s: A \otimes B 
	\quad \tuple:\Delta, x: A, y: B, \tuple':\Delta ' \vdash t: C
	\justifies
	\tuple:\Delta ,\tuplebis:\Gamma, \tuple':\Delta ' \vdash t[s\lfrom x,\epsilon\lfrom y] : C 
	\using
             [\otimes_e]
         \thickness=0.07em
\endprooftree
\end{small}
\end{center}

The \emph{move} rule is typical from Chomskyan linguistics: our construct mimics the 
movement of the  constituent $\lfrom$ string $s$ from the $y$ place to the $x$ place.

Here is an example with a tiny lexicon, especially simple because Italian allows null subjects, 
exemplifying the movement of an interrogative noun phrase to a leftmost position: 

\begin{tabular}{p{1cm}p{3cm}p{1cm}p{2cm}p{1cm}p{2cm}}
que& $wh \otimes (k \otimes d)\lfrom n$
&cosa& $n$
&$\epsilon$ & $k \otimes d$\\
fai  &$k\lto d \lto v \lfrom  d$
&
infl &$k \lto t \lfrom  v$
&comp& $wh \lto c \lfrom  t$
\end{tabular}

\begin{figure}[ht]
\begin{tiny}
\hspace{-19ex}
\prooftree
	\prooftree
	\begin{array}{l}che\\ \vdash wh \otimes (k \otimes d)\lfrom n\end{array}
	\begin{array}{l}cosa\\\quad \vdash n\end{array}
	\justifies
	\vdash wh \otimes (k \otimes d)
	\using
	  [merge]
	\thickness=0.07em
	\endprooftree
	\prooftree
		wh \vdash wh
		\hspace{-3ex}
		\prooftree
			\begin{array}{l}comp\\ \vdash wh \lto c \lfrom  t\end{array}
			\hspace{-3ex}
			\prooftree
				 \begin{array}{l}\epsilon\\\vdash k \otimes d\end{array}
				 \hspace{-3ex}
				\prooftree
					k \vdash k
					\hspace{-3ex}
					\prooftree
						\begin{array}{l}infl\\\vdash k \lto t \lfrom  v\end{array}
						\prooftree
							d \vdash d
							\hspace{-3ex}
							\prooftree
								k \otimes d \vdash k \otimes d
								\hspace{-3ex}
								\prooftree
									k \vdash k
									\hspace{-3ex}
									\prooftree
										\begin{array}{l}fai\\ \vdash k\lto d \lto v \lfrom  d\end{array}
										\begin{array}{l}$\,$\\\quad d \vdash d\end{array}
										\justifies
										d \vdash k\lto d \lto v
										\using
										  [merge]
										\thickness=0.07em
									\endprooftree
									\justifies
									k, d \vdash d \lto v
									\using
									  [merge]
									\thickness=0.07em
								\endprooftree
								\justifies
								k \otimes d \vdash d \lto v
								\using
								  [move]
								\thickness=0.07em
							\endprooftree
							\justifies
							d, k \otimes d \vdash v
							\using
							  [merge]
							\thickness=0.07em
						\endprooftree
						\justifies
						d, k \otimes d \vdash k \lto t 
						\using
						  [merge]
						\thickness=0.07em
					\endprooftree
					\justifies
					k, d, k \otimes d \vdash t 
					\using
					  [merge]
					\thickness=0.07em
				\endprooftree
				\justifies
				k \otimes d \vdash t 
				\using
				  [move]
				\thickness=0.07em
			\endprooftree
			\justifies
			k \otimes d \vdash wh \lto c 
			\using
			  [merge]
			\thickness=0.07em
		\endprooftree
		\justifies
		wh, k \otimes d \vdash c 
		\using
		  [merge]
		\thickness=0.07em
	\endprooftree
	\justifies
	\vdash c 
	\using
	  [move]
	\thickness=0.07em
\endprooftree
\end{tiny}
\caption{"che cosa $\epsilon$ fai ?" analyze}
\end{figure}

Our interest for such an analysis is that we are able  from such a structure to automatically compute the semantic reading with correct quantifier scope, 
which is, for our example: $\exists? (\lambda x\ (\land (cosa(x)) (far(tu,x)))) \equiv \exists? x (cosa(x) \land far(tu,x)$ 
Of course our representation of minimalist grammars and its semantic-interpretation outcome is extremely sketchy, 
and details can be found in \cite{LR01acl,Amblard06fg}. Indeed the objective of this paper is the normalisation of this calculus.

\section{Normalisation du Lambek avec produit}

\subsection{Properties of L$_{\odot}$}
Lambek calculus with product (L$_\odot$) is the restriction of PCMLL to the connectives:  $\lto$, $\lfrom $ and $\odot$.
Furthermore, it uses only the order $\langle ... ; ... \rangle$, \textit{i.e.} it treats  sequences of formulae (total order). That wy, the order is not mark. The entropy rule is not use here. 

\begin{propri}\label{proofriselambek}

Let $R$ a product elimination $\odot_e$ of $\Gamma[\Delta]\seq C$ between a proof $\delta_0$ with $\Delta\seq A\odot B$ as  conclusion and a proof with $\Gamma[A , B ]\seq C$ as conclusion obtained by a rule $R'$ between a proof $\delta_1$ of a sequent $\Theta[ A , B]\seq X$ (and if $R'$ is a binary rule, a second proof $\delta_2$ with $\Psi\seq U$ as conclusion).
This description of the structure of theses proofs is presented in figure \ref{montprod1}.

Thus, we can obtain a proof for the same sequent $\Gamma[\Delta]\seq C$ by first apply the rule $R$ between the proof $\delta_0$ with $\Delta\seq A\odot B$ as conclusion and the proof $\delta_1$ with $\Theta[A,B]\seq X$ as conclusion, which produces the sequent $\Theta[\Delta]\seq X$.
By applying the rule $R'$ to this new proof and the proof $\delta_2$, we obtain the same sequent $\Gamma[\Delta]\seq C$.
\end{propri}

\begin{figure}[htbp]
\begin{center}

\prooftree
	\prooftree
		\leadsto
		\Delta \seq A\odot B
		\using
		  \delta_0
	\endprooftree
	\quad 
	\prooftree
		\prooftree
			\leadsto
			\Psi \seq U
			\using
			  \delta_2			
		\endprooftree
		\prooftree
			\leadsto
			\Theta[ A,B] \seq X
			\using
			  \delta_1			
		\endprooftree
		\justifies
		\Gamma[A,B] \seq C
		\using R'
	\endprooftree
	\justifies
	\Gamma[\Delta] \seq C
	\using R
\endprooftree
$\quad \Rightarrow \quad$
\prooftree
	\prooftree
		\leadsto
		\Psi \seq U
		\using
		  \delta_2			
	\endprooftree
	\prooftree
		\prooftree
			\leadsto
			\Delta \seq A\odot B
			\using
			  \delta_0
		\endprooftree
		\quad 
		\prooftree
			\leadsto
			\Theta[A,B] \seq X
			\using
			  \delta_1			
		\endprooftree
		\justifies
		\Theta[\Delta] \seq X
		\using R
	\endprooftree
	\justifies
	\Gamma[\Delta] \seq C
	\using R'
\endprooftree
\end{center}
\caption{Proof's structure allowing the rise of the product in L$_{\odot}$. \label{montprod1}}
\end{figure}

\begin{proof} 
The proof of this proposition is a case study based on the type of the rule above the product elimination. 
The following presents the different cases:

\begin{itemize}
\item[$\circ$] Rising over $\lto_e$:

\begin{itemize}
\item[$\bullet$] hypotheses in the left premise of $\lto_e$:

\bigskip

\hspace{-5ex}
\begin{tabular}{lcc}
\prooftree 
	\Gamma \vdash  A \odot B
	\; \prooftree 
		A, B \vdash D
		\quad \Delta \vdash D \lto C
		\justifies
		A,B, \Delta \vdash C 
		\using
	             [\lto_{e}]
	         \thickness=0.07em
	\endprooftree 
	\justifies
	\Gamma, \Delta \vdash C 
	\using
             [\odot_{e}]
         \thickness=0.07em
\endprooftree 
&
$\Rightarrow$
&
\prooftree 
	\prooftree 
		\Gamma \vdash  A \odot B
		\quad A, B \vdash D
		\justifies
		\Gamma \vdash D
		\using
	             [\odot_{e}]
	         \thickness=0.07em
	\endprooftree 
	\Delta \vdash D \lto C
	\justifies
	\Gamma, \Delta \vdash C 
	\using
             [\lto_{e}]
         \thickness=0.07em
\endprooftree \\
\end{tabular}

\bigskip

\item[$\bullet$] hypotheses in the right premise of $\lto_e$:

\bigskip

\hspace{-5ex}
\begin{tabular}{ccc}
\prooftree 
	\Gamma \vdash  A \odot B
	\; \prooftree 
		\Delta \vdash D
		\quad A, B \vdash D \lto C
		\justifies
		\Delta, A,B \vdash C 
		\using
	             [\lto_{e}]
	         \thickness=0.07em
	\endprooftree 
	\justifies
	\Delta , \Gamma \vdash C 
	\using
             [\odot_{e}]
         \thickness=0.07em
\endprooftree 
&
$\Rightarrow$
&
\prooftree 
	\Delta \vdash D 
	\quad \prooftree 
		\Gamma \vdash  A \odot B
		\quad A, B \vdash D \lto C
		\justifies
		\Gamma \vdash D \lto C
		\using
	             [\odot_{e}]
	         \thickness=0.07em
	\endprooftree
	\justifies
	\Delta , \Gamma \vdash C 
	\using
             [\lto_{e}]
         \thickness=0.07em
\endprooftree \\
\end{tabular}
\end{itemize}
\bigskip

\item[$\circ$] Rising over $\lto_i$:

\bigskip

\hspace{-3ex}
\begin{tabular}{lcc}
\prooftree 
	\Gamma \vdash  A \odot B
	\quad \prooftree 
		D, \Delta, A,B, \Delta ' \vdash C
		\justifies
		\Delta, A,B, \Delta ' \vdash D \lto C 
		\using
	             [\lto_{i}]
	         \thickness=0.07em
	\endprooftree 
	\justifies
	\Delta, \Gamma, \Delta ' \vdash D \lto C 
	\using
             [\odot_{e}]
         \thickness=0.07em
\endprooftree 
&
$\Rightarrow$
&
\prooftree 
	\prooftree 
		\Gamma \vdash  A \odot B
		\quad D, \Delta, A,B, \Delta '\vdash C
		\justifies
		D, \Delta, \Gamma, \Delta ' \vdash C
		\using
	             [\odot_{e}]
	         \thickness=0.07em
	\endprooftree 
	\justifies
	\Delta, \Gamma, \Delta ' \vdash D \lto C
	\using
             [\lto_{i}]
         \thickness=0.07em
\endprooftree 
\end{tabular}
\bigskip

\item[$\circ$] Rising over $\odot_e$:
\begin{itemize}

\item[$\bullet$] hypotheses in the left premise of $\lfrom _e$:
\smallskip

\hspace{-5ex}
\bigskip
\begin{tabular}{lcc}
\prooftree 
	\Gamma \vdash  A \odot B
	\prooftree 
		\Delta \vdash C \lfrom  D
		\quad A, B \vdash D
		\justifies
		\Delta, A,B \vdash C 
		\using
	             [\lfrom _{e}]
	         \thickness=0.07em
	\endprooftree 
	\justifies
	\Delta, \Gamma \vdash C 
	\using
             [\odot_{e}]
         \thickness=0.07em
\endprooftree 
&
$\Rightarrow$
&
\prooftree 
	\Delta \vdash C \lfrom  D
	\quad \prooftree 
		\Gamma \vdash  A \odot B
		\quad A, B \vdash D
		\justifies
		\Gamma \vdash D
		\using
	             [\odot_{e}]
	         \thickness=0.07em
	\endprooftree 
	\justifies
	\Delta, \Gamma \vdash C 
	\using
             [\lfrom _{e}]
         \thickness=0.07em
\endprooftree 
\end{tabular}
\bigskip

\item[$\bullet$] hypotheses in the right premise of $\lfrom _e$:

\bigskip

\hspace{-5ex}
\begin{tabular}{lcc}
\prooftree 
	\Gamma \vdash  A \odot B
	\prooftree 
		A, B \vdash C \lfrom  D
		\quad \Delta \vdash D
		\justifies
		A,B, \Delta \vdash C 
		\using
	             [\lfrom _{e}]
	         \thickness=0.07em
	\endprooftree 
	\justifies
	\Gamma, \Delta \vdash C 
	\using
             [\odot_{e}]
         \thickness=0.07em
\endprooftree 
&
$\Rightarrow$
&
\prooftree 
	\prooftree 
		\Gamma \vdash  A \odot B
		\quad A, B \vdash C \lfrom  D
		\justifies
		\Gamma \vdash C \lfrom  D
		\using
	             [\odot_{e}]
	         \thickness=0.07em
	\endprooftree 
	\Delta \vdash D
	\justifies
	\Gamma, \Delta \vdash C 
	\using
             [\lfrom _{e}]
         \thickness=0.07em
\endprooftree 
\end{tabular}
\end{itemize}
\smallskip

\item[$\circ$] Rising over $\lfrom _i$:

\bigskip

\hspace{-5ex}
\begin{tabular}{lcc}
\prooftree 
	\Gamma \vdash  A \odot B
	\quad \prooftree 
		\Delta, A,B, \Delta ', D \vdash C
		\justifies
		\Delta, A,B, \Delta ', \vdash C \lfrom  D 
		\using
	             [\lfrom _{i}]
	         \thickness=0.07em
	\endprooftree 
	\justifies
	\Delta, \Gamma, \Delta ' \vdash C \lfrom  D
	\using
             [\odot_{e}]
         \thickness=0.07em
\endprooftree 
&
$\Rightarrow$
&
\prooftree 
	 \prooftree 
		\Gamma \vdash  A \odot B
		\quad \Delta, A,B, \Delta ', D \vdash C
		\justifies
		\Delta, \Gamma, \Delta ', D \vdash C
		\using
	             [\odot_{e}]
	         \thickness=0.07em
	\endprooftree 
	\justifies
	\Delta, \Gamma, \Delta ' \vdash C \lfrom  D
	\using
             [\lfrom _{i}]
         \thickness=0.07em
\endprooftree 
\end{tabular}

\bigskip
\item[$\circ$] Rising over $\odot_e$:
\begin{itemize}
\item[$\bullet$] hypotheses in the left premise of the first $\odot_e$:

\bigskip
\prooftree 
	\Gamma \vdash A \odot B
	\prooftree 
		\Delta, A,B, \Delta ' \vdash C \odot D
		\quad \Phi, C,D, \Phi ' \vdash E
		\justifies
		\Phi, \Delta, A,B, \Delta ', \Phi ' \vdash E
		\using
	             [\odot_{e}]
	         \thickness=0.07em
	\endprooftree 
	\justifies
	\Phi, \Delta, \Gamma, \Delta ', \Phi ' \vdash E
	\using
             [\odot_{e}]
         \thickness=0.07em
\endprooftree \\

\bigskip

\begin{flushright}
$\Rightarrow \quad$
\prooftree 
	\prooftree 
		\Gamma \vdash A \odot B
		\quad \Delta, A,B, \Delta ' \vdash C \odot D
		\justifies
		\Delta, \Gamma, \Delta ' \vdash C \odot D
		\using
	             [\odot_{e}]
	         \thickness=0.07em
	\endprooftree 
	\quad \Phi, C,D, \Phi ' \vdash E
	\justifies
	\Phi, \Delta, \Gamma, \Delta, \Phi ' \vdash E
	\using
             [\odot_{e}]
         \thickness=0.07em
\endprooftree 
\end{flushright}
\bigskip

\item[$\bullet$] hypotheses in the right premise of the first $\odot_e$:

\bigskip
\prooftree 
	\Gamma \vdash A \odot B
	\prooftree 
		\Delta \vdash C \odot D
		\quad \Phi, A,B,C,D, \Phi ' \vdash E
		\justifies
		\Phi, A,B,\Delta, \Phi ' \vdash E
		\using
	             [\odot_{e}]
	         \thickness=0.07em
	\endprooftree 
	\justifies
	\Phi, \Gamma,\Delta, \Phi ' \vdash E
	\using
             [\odot_{e}]
         \thickness=0.07em
\endprooftree \\

\begin{flushright}
$\Rightarrow \quad$
\prooftree 
	\Delta \vdash C \odot D
	\prooftree 
		\Gamma \vdash A \odot B
		\quad \Phi, A,B,C,D, \Phi ' \vdash E
		\justifies
		\Phi, \Gamma,C,D, \Phi ' \vdash E
		\using
	             [\odot_{e}]
	         \thickness=0.07em
	\endprooftree 
	\justifies
	\Phi, \Gamma,\Delta, \Phi ' \vdash E
	\using
             [\odot_{e}]
         \thickness=0.07em
\endprooftree 
\end{flushright}
\end{itemize}
\bigskip

\item[$\circ$] Rising over $\odot_i$:
\begin{itemize}
\item[$\bullet$] hypotheses in the left premise of the first $\odot_i$:

\bigskip
\hspace{-9ex}
\prooftree 
	\Gamma \vdash A \odot B
	\prooftree 
		\Delta, A,B, \Delta ' \vdash C
		\quad \Phi \vdash D
		\justifies
		\Delta, A,B, \Delta ', \Phi \vdash C \odot D
		\using
	             [\odot_{i}]
	         \thickness=0.07em
	\endprooftree 
	\justifies
	\Delta, \Gamma, \Delta ', \Phi \vdash C \odot D
	\using
             [\odot_{e}]
         \thickness=0.07em
\endprooftree 
$\;\, \Rightarrow$
\prooftree 
	\prooftree 
		\Gamma \vdash A \odot B
		\quad \Delta, A,B, \Delta ' \vdash C
		\justifies
		\Delta, \Gamma, \Delta ' \vdash C
		\using
	             [\odot_{e}]
	         \thickness=0.07em
	\endprooftree 
	\Phi \vdash D
	\justifies
	\Delta, \Gamma , \Delta ', \Phi \vdash C \odot D
	\using
            [\odot_{i}]
         \thickness=0.07em
\endprooftree 

\bigskip
\item[$\bullet$] hypotheses in the right premise of the first $\odot_i$:

\bigskip
\hspace{-10ex}
\begin{tabular}{lcc}
\prooftree 
	\Gamma \vdash A \odot B
	\prooftree 
		\Delta \vdash C
		\quad \Phi, A,B, \Phi ' \vdash D
		\justifies
		\Delta, \Phi, A,B, \Phi ' \vdash C \odot D
		\using
	             [\odot_{i}]
	         \thickness=0.07em
	\endprooftree 
	\justifies
	\Delta, \Phi, \Gamma, \Phi ' \vdash C \odot D
	\using
             [\odot_{e}]
         \thickness=0.07em
\endprooftree 
&
$\Rightarrow$
&
\prooftree 
	\Delta \vdash C
	\prooftree 
		\Gamma \vdash A \odot B
		\quad  \Phi, A,B, \Phi ' \vdash D
		\justifies
		\Phi, \Gamma, \Phi ' \vdash D
		\using
	             [\odot_{e}]
	         \thickness=0.07em
	\endprooftree 
	\justifies
	\Delta, \Phi, \Gamma, \Phi ' \vdash C \odot D
	\using
            [\odot_{i}]
         \thickness=0.07em
\endprooftree
\bigskip
 
\end{tabular}
\end{itemize}
\end{itemize}

All possible cases of combinations of rules have been examined. 
The product elimination has the ability to rise above any rule if the hypothesis used are in the same premise.
\end{proof}

\begin{defi}
A $\odot_e$ rule over the sequent $A\odot B$ is said \textbf{as high as possible} if in the premise of the rule above, the two hypothesis $A$ and $B$ are not in the same formula.

We say that a rule is \textbf{at the bottom} of the proof if it is the rule that provides the sequent conclusion of the proof.
\end{defi}

Let us call a \textbf{redex} in a proof the immediate succession of an introduction rule and its conjoined elimination.

In this calculus, there are four \textit{redexes}: $\lfrom$ one, $\lto$ one and $\odot$ two, depending in which premise the $\odot_i$ takes place.
We give the proof schemes
\begin{enumerate}
\item[$\circ$] Redex$_{\lfrom }$: introduction $\lfrom _i$ and immediate elimination of $\lfrom _e$.
\begin{center}
\prooftree 
	\prooftree 
		\prooftree
			D
			\leadsto 
			C
		\endprooftree
		\justifies
		C \lfrom D 
		\using
	         	    [\lfrom _{i}]
         		\thickness=0.07em
	\endprooftree 
	\quad \prooftree \leadsto  D  \using \delta_1 \proofdotnumber=5 \endprooftree 
	\justifies
	C
	\using
             [\lfrom _{e}]
         \thickness=0.07em
\endprooftree 
$\quad \Rightarrow\quad$
\prooftree \prooftree   \leadsto D \using \delta_1 \proofdotnumber=5 \endprooftree \leadsto  C\endprooftree
\end{center}
\item[$\circ$] Redex$_{\lto}$: introduction $\lto_i$ and immediate elimination of $\lto_e$.
\begin{center}
\prooftree 
	\prooftree \leadsto  D \using \delta_1 \proofdotnumber=5 \endprooftree 
	\quad\prooftree 
		\prooftree D\leadsto  C\endprooftree 
		\justifies
		D\lto C 
		\using
	         	    [\lto_{i}]
         		\thickness=0.07em
	\endprooftree 
	\justifies
	C
	\using
             [\lto_{e}]
         \thickness=0.07em
\endprooftree 
$\quad \Rightarrow\quad$
\prooftree \prooftree \leadsto D \using \delta_1 \proofdotnumber=5 \endprooftree \leadsto  C\endprooftree
\end{center}

\item[$\circ$] Redex$_{\odot}$: introduction $\odot_i$ and immediate elimination of $\odot_e$ on the left. gauche.
\begin{center}
\prooftree 
	\prooftree 
		\prooftree \leadsto  A\using \delta_1 \proofdotnumber=5 \endprooftree 
		\prooftree \leadsto  B\using \delta_2 \proofdotnumber=5 \endprooftree 
		\justifies
		A\odot B
		\using
	         	    [\odot_{i}]
         		\thickness=0.07em
	\endprooftree 
	\quad \prooftree A \quad B \leadsto  D\endprooftree 
	\justifies
	D
	\using
             [\odot_{e}]
         \thickness=0.07em
\endprooftree 
$\quad \Rightarrow\quad$
 \prooftree  \prooftree \leadsto  A\using \delta_1 \proofdotnumber=5\endprooftree  \quad \prooftree  \leadsto  B \using \delta_1 \proofdotnumber=5 \endprooftree  \leadsto  D\endprooftree
\end{center}
\item[$\circ$] Redex$_{\odot}$: introduction $\odot_i$ and immediate elimination of $\odot_e$ on the right.
\begin{center}
\prooftree 
	 \prooftree \leadsto  A\odot B \using \delta_1 \proofdotnumber=5 \endprooftree 
	\quad \prooftree 
		A
		\quad B
		\justifies
		A\odot B
		\using
	         	    [\odot_{i}]
         		\thickness=0.07em
	\endprooftree 
	\justifies
	A\odot B
	\using
             [\odot_{e}]
         \thickness=0.07em
\endprooftree 
$\quad \Rightarrow\quad$
\prooftree \leadsto  A\odot B \using \delta_1 \proofdotnumber=5\endprooftree
\end{center}
\end{enumerate}

From the notion of redex, we define the \textbf{$\mathbf{k}$-extended-redex} on a proof.

Every path of a principal branch $B(S_0)$ of lenght $k$ from $S_0$ to $S_n$ with $|S_0|^r=|S_n|^r$, such that $|S_0|$ marks an elimination rule $R_e$ and $S_n$ is the conclusion of an introduction rule $R_i$, is called a $\mathbf{k}$\textbf{-extended-redex}. 
Note that $0$\emph{-extended-redexes} are redexes of PCMLL, already presented.

\begin{prop} 
A $k$-extended-redex only containts $\odot_e$ rules or a $k'$-extended-redex, with $k' < k$.
\end{prop}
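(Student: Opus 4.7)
The plan is to walk up the path $S_0, S_1, \ldots, S_n$ of the $k$-extended-redex and analyse how the right-hand side $|S_i|^r$ evolves. Along a principal branch, only $\odot_e$ preserves the right-hand side: the implicative eliminations $\lto_e$ and $\lfrom_e$ grow it by one outer implication (the branch passes to the premise carrying that implication), while the implicative introductions $\lto_i$ and $\lfrom_i$ shrink it by one outer connective. Since $|S_0|^r = |S_n|^r$, the added and removed connectives must cancel formula by formula, yielding a well-parenthesised (LIFO) matching of ``growing'' and ``shrinking'' steps along the intermediate rules.

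If every intermediate rule is $\odot_e$, the first clause of the conclusion holds. Otherwise the balance argument forces both implicative eliminations and implicative introductions to appear among the intermediates, and I would select an innermost matched pair $(R_a, R_b)$, that is, indices $a < b$ such that $R_a$ is an elim whose outer connective is popped back off precisely by the intro $R_b$, with no other matched pair nested strictly between. By the innermost condition every rule $R_c$ with $a < c < b$ must be $\odot_e$: any further implicative rule would, by balance, yield a matched pair strictly inside, contradicting minimality. Consequently the right-hand side is constant on the sub-path $S_{a+1}, \ldots, S_b$, so $|S_{a+1}|^r = |S_b|^r$, and this sub-path is itself an extended-redex with the elimination $R_{a+1} = \odot_e$ at its foot and the introduction $R_b$ at its top. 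Its length $k' = b - a - 2$ is strictly less than $k$ since $0 \le a$ and $b \le n - 1$.

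The main obstacle I foresee is the boundary case in which the innermost matched pair is adjacent, $b = a + 1$: then the candidate sub-path collapses to a single sequent. In this situation, however, $R_a$ and $R_{a+1}$ constitute a directly conjoined elimination-introduction, that is, a plain redex of the calculus, which counts as a $0$-extended-redex by the remark following the definition, so the second alternative of the proposition still witnesses the conclusion. The easier boundary cases $a = 0$ or $b = n$ merely push the sub-extended-redex against one of the original endpoints while preserving the strict inequality $k' < k$.
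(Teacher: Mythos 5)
Your overall strategy---tracking the right-hand formula along the principal branch, with implicative eliminations growing it by one outer connective, implicative introductions shrinking it, and $\odot_e$ leaving it unchanged---is sound and is essentially the counting argument the paper sketches. But the key selection step has a genuine gap: you assume the cancellations form a LIFO matching in which every shrinking step closes a connective opened by an \emph{earlier intermediate elimination}, and hence that an innermost matched pair with the elimination below and the introduction above exists among the intermediate rules. This is not forced, because the walk may dip \emph{below} the common endpoint formula $X=|S_0|^r=|S_n|^r$: an intermediate introduction can peel off the outermost connective of $X$ itself before any intermediate elimination occurs. Concretely, take $X=A\lto C$ and the path $S_0=\Gamma\seq A\lto C$ (conclusion of an $\lto_i$), $S_1=\langle A;\Gamma\rangle\seq C$ (conclusion of an $\lto_e$ whose major premise is again $\Gamma\seq A\lto C$), $S_2=\Gamma\seq A\lto C$ (conclusion of another $\lto_i$), with an $\lto_e$ below $S_0$. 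This is a $2$-extended-redex whose intermediate rules are an introduction followed by an elimination; the only matched pair has the orientation (introduction, elimination), so the pair $(R_a,R_b)$ you want to select does not exist and your construction produces nothing---although the statement is true there, since the sub-path reduced to $S_0$ (or to $S_2$) is already a $0$-extended-redex, pairing with the boundary elimination $R_e$ (or the boundary introduction $R_i$).

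The repair is modest but necessary. Either run the matching on the walk extended by the boundary rules, adding $R_e$ below $S_0$ as an initial growing step and $R_i$ above $S_n$ as a final shrinking step: since the extended walk then begins with a growing step, an innermost (elimination, introduction) pair with only $\odot_e$ strictly between always exists, and unless it is the pair $(R_e,R_i)$ itself---which is exactly the case where all intermediate rules are $\odot_e$---it yields an extended-redex of length strictly less than $k$, possibly anchored at $S_0$ or $S_n$. Or, equivalently, treat separately the case where the lowest non-$\odot_e$ intermediate rule is an introduction: the initial segment of the path up to the conclusion of that introduction is then itself an extended-redex of length at most $k-1$, conjoined with $R_e$. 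Your closing remark about the boundary cases ``$a=0$ or $b=n$'' gestures in this direction but does not do the work: as written your matching ranges over the intermediates only, and in the problematic configuration the pair \emph{must} involve a boundary rule. With this case added, your argument is correct and is in substance the same as the paper's (a connective added along the branch must be removed by a matching rule, the nested portion being the smaller extended-redex, and only $\odot_e$ keeps the formula unchanged); you could also state explicitly the small fact you use tacitly, namely that $\odot_i$ cannot occur as an intermediate rule because the principal branch does not continue through binary introductions.
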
 

\begin{proof}
For an instance of $X$ which  become a $X\lfrom U$, $X$ will necessary be produced by an elimination of $U$ (for the ``same" $X$). In this case, the$k$-extended-redex contains a smaller $k'$-ectended-redex. Only the $\odot_e$ rule allows to keep in the conclusion one of premises, it can be used an unspecified number of times without changing the concept of the same $X$.
\end{proof}

\subsection{Normalisation de L$_{\odot}$}

A \textbf{normal proof} is a proof which contains no $k$-extended-redexes and where all $\odot_e$ rules are as higher as possible.

Lets $\delta$ a proof, we define $PE(\delta)$ as the set of occurences of $\odot_e$ rules in $\delta$.
For $R\in PE(\delta)$, we define the two following integers:
\begin{enumerate}
\item the integer  $g(R)$ is the number of rules if there is a $k$-extended-redex in $B(S_0)$ and 0 otherwise with as conclusion premise $S_0$;

\item the integer $d_{conj}(R)$ is the number of rules do not belong to $PE(\delta)$ between $R$ and the rule which relies hypothesis $A$ and $B$ erased by $R$. 

\end{enumerate}

We define $h(\delta)$ as $\sum_{R \in PE(\delta)} d_{conj}(R)$
and $g(\delta)$ as $\min_{R\in PE(\delta)}(g(R))$ equal to $0$ is and only if $\delta$ no longer possesses $k$-extended-redex product.
We note $n(\delta)$ the number of rules of $\delta$.

In order to prove the normalization of L$_\odot$, we define the following measure over the proof $\delta$ be be a triplet of integers, which respect the lexicographic order:

$$|\delta|=\langle n(\delta), h(\delta), g(\delta) \rangle $$

\begin{propri}\label{propnormalisation}
A proof $\delta$ is normal if and only if it does not contain $0$-extended-redex and if $h(\delta) = 0$ and $g(\delta) = 0$.
\end{propri}

\begin{proof}
Let $\delta$ a proof of L$_\odot$, 
belongs to $|\delta|$:
\begin{itemize}
\item[$\circ$] the first integer is the classical one for Lambek calculus normalisation.
It is minimal if the proof does not contain $0$-extended-redex.

\item[$\circ$] the second integer give the process to rise $\odot_e$ to their higher position. In this phase, $k$-extended-redexes$_{\lfrom}$ and $_{\lto}$ appears and can be cancelled.
If every $\odot_e$ have their higher position in the proof then $h(\delta) = 0$.
Only $k$-extended-redexes $_{\odot}$ remains in $\delta$.
This case is presented in example 1 of Figure \ref{exformnorm}.

\item[$\circ$] the third integer represents the number of rules in a $k$-extended-redex$_{\odot}$.
When it is null, there is no more $k$-extended-redex${\odot}$ in $\delta$.
This case is presented in example 2 of Figure \ref{exformnorm}.
\end{itemize}
\end{proof}

\begin{figure}[htbp]
\begin{center}
$\begin{array}{l|l}
$example 1$& $example 2$\\
\prooftree
	\vdash C
	\prooftree
		\vdash E \odot F 
		\quad \prooftree
			C, E, F \vdash A \lfrom  B
			\justifies
			E, F \vdash C \lto (A \lfrom  B)
			\using
		             [\lto_{i}]
         			\thickness=0.07em
		\endprooftree
		\justifies
		\vdash C \lto (A \lfrom  B)
		\using
		   [\odot_{e}]
	\endprooftree
	\justifies
	\vdash  A \lfrom  B
	\using
	   [\lto_{e}]
\endprooftree
&
\prooftree
	\vdash E \odot F
	\prooftree
		\vdash C
		\quad \prooftree
			E \vdash (C \lto A) \lfrom  B
			\quad F \vdash B
			\justifies
			E, F \vdash C \lto A
			\using
		             [\lfrom _{e}]
         			\thickness=0.07em
		\endprooftree
		\justifies
		E, F \vdash A
		\using
		   [\lfrom _{e}]
	\endprooftree
	\justifies
	\vdash  A \lfrom  B
	\using
	   [\odot_{e}]
\endprooftree
\\
g(\lto_e) = 1 \Rightarrow& h(\odot_e) = 1\Rightarrow\\
k$-extended-redex$ & \odot_e$ does not have its position$
\end{array}$
\end{center}
\caption{Examples of proof which are not in normal form \label{exformnorm}}
\end{figure}

\begin{theo}
Every proof $\delta$ in L$_{\odot}$ calculus has a unique normal form.
\end{theo}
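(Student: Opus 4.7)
The plan is to prove existence by well-founded induction on the measure $|\delta|=\langle n(\delta), h(\delta), g(\delta)\rangle$ equipped with the lexicographic order on $\mathbb{N}^{3}$, and then to derive uniqueness from the resulting strong normalisation via a local-confluence argument. Since lexicographic $\mathbb{N}^{3}$ is well-founded, it suffices to exhibit, for every non-normal $\delta$, a reduction step whose output has strictly smaller measure.

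For existence, if $\delta$ is not normal, Property \ref{propnormalisation} guarantees that at least one of the following holds: $\delta$ contains a $0$-extended-redex, or $h(\delta)>0$, or $g(\delta)>0$. I would handle these three situations in that order of priority. If a standard redex is present, I contract it with one of the four schemes listed above; each scheme strictly reduces $n(\delta)$, so $|\delta|$ drops in its first component. If no $0$-extended-redex exists but $h(\delta)>0$, I pick an occurrence $R$ of $\odot_{e}$ which is not as high as possible and apply Property \ref{proofriselambek} to permute it with the rule immediately above. Because the permutation preserves the total number of rules, and because $d_{conj}$ ignores rules belonging to $PE(\delta)$, the count $h(\delta)$ strictly decreases without disturbing any other $d_{conj}(R')$, and $n(\delta)$ is preserved. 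Finally, if only $g(\delta)>0$, there is a $k$-extended-redex${}_{\odot}$ inside some principal branch which, by the preceding proposition, consists of a sequence $\odot_{i},\odot_{e},\dots,\odot_{e}$; I iteratively commute the bottom $\odot_{e}$ upward through the intervening $\odot_{e}$'s using Property \ref{proofriselambek}, producing a genuine $\odot_{i}$--$\odot_{e}$ redex which now falls under the first case and strictly drops $n(\delta)$.

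For uniqueness, having obtained strong normalisation from the above, I invoke Newman's lemma and reduce the statement to local confluence of the reduction relation $\to$ formed by the four redex contractions together with the rising step. Local confluence is then a finite case analysis: when the two reducts $\delta_{1},\delta_{2}$ come from operations performed in disjoint sub-proofs the closing square is immediate; when they interfere, the critical pairs involve either two overlapping standard redexes (handled by the classical Prawitz-style normalisation argument) or a standard redex together with a $\odot_{e}$-rising step (handled by reapplying the rising in the obvious way after contracting the redex). Each critical pair closes in a bounded number of additional steps.

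The main obstacle will be keeping the measure under control in the third case of the existence argument: a priori, commuting a $\odot_{e}$ upward could create new standard redexes higher in the proof or push some other $\odot_{e}$ further from its conjoined joining rule. The key observation that makes the bookkeeping go through is precisely that $d_{conj}$ is defined to count only non-$PE$ rules, so the cascade of $\odot_{e}$-over-$\odot_{e}$ permutations is invisible to $h(\delta)$; the only component that may fluctuate during the cascade is $g(\delta)$ itself, which is dominated by $n(\delta)$ in the lexicographic measure, so the final redex contraction still yields a strict decrease of $|\delta|$.
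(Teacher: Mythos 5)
Your existence argument coincides with the paper's: the same lexicographic measure $\langle n(\delta), h(\delta), g(\delta)\rangle$, the same three-case priority, and essentially the same macro-steps (contract a redex, raise a $\odot_e$ that is not as high as possible, raise a product elimination along its principal branch towards its conjoined $\odot_i$). One small slip there: permuting $R$ with the single rule immediately above it does not always decrease $h(\delta)$ --- if that rule is itself a $\odot_e$, then $d_{conj}(R)$ is unchanged precisely because it counts only rules outside $PE(\delta)$ --- so, as in the paper's proof, the step that decreases $h$ must be the whole block move over all intermediate $\odot_e$'s followed by the first non-product rule $R'$.

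The genuine gap is in the uniqueness half. Newman's lemma requires \emph{strong} normalisation of the full reduction relation (termination of every reduction sequence), whereas your measure argument only yields \emph{weak} normalisation: it exhibits one terminating strategy. Arbitrary steps of the relation you define need not decrease $\langle n(\delta), h(\delta), g(\delta)\rangle$ --- in particular a rising step permuting one $\odot_e$ over another leaves $n$ and $h$ untouched and need not change $g$ --- so termination of the whole relation is not proved and Newman's lemma cannot be invoked as stated. Moreover, your critical-pair analysis omits exactly the delicate overlap: two rising steps interfering with each other, i.e.\ permutations of product eliminations among themselves. That case is the crux of uniqueness for this calculus: the paper does not go through confluence at all, but closes its induction with a direct positional argument --- a $\odot_e$ that is as high as possible must sit just below the binary rule that joins its two hypotheses $A$ and $B$ (the rightmost hypothesis of the left premise and the leftmost of the right premise), and only one $\odot_e$ can occupy that position, so every $\odot_e$ has a uniquely determined place in a normal proof. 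Without either that positional argument or an explicit closure of the $\odot_e$/$\odot_e$ permutation pairs (which, as the paper's conclusion points out, is precisely what destroys uniqueness in full PCMLL), the uniqueness claim remains unproved in your proposal.
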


\begin{proof} We proceed by induction on $|\delta|$. 
By induction hypotheses, every proof $\delta '$ of size $|\delta'|< \langle r, d, g \rangle$ has a unique normal form.
Given a proof $\delta$ of size $= \langle r, d, g \rangle$, let us show that $\delta$ has a unique normal form as well.

$\bullet$ if $\delta$ has a redex, we could reduce it. 
The induced proof $\delta'$, $n(\delta') < n(\delta)$, hence $|\delta'| < \langle r, d, g \rangle$ and by induction, $\delta'$ has a unique normal form, then $\delta$ as well.

$\bullet$ Else:
\begin{verse}
If $d \neq 0$: let $R$ the lowest rule  $\odot_e$ such that $\neq0$. Hence, there exists a rule $R' \neq \odot_e$ higher than $R$, and $R$ can move upwards over all $\odot_e$ and over $R'$, next $R$ can rise over $R'$. the induced proof $\delta'$ is such that $n(\delta') = n(\delta) $ and $h(\delta') = h(\delta) - 1$.
Values of $d_{conj}(R_i)$, for $R_i$ $\odot_e$ rules below $R$, stands null because $R$ do not contribute to $d_{conj}(\_)$). 
Therefore $\delta' < \langle r, d, g \rangle$, and by induction , $\delta'$ has a unique normal form, then $\delta$ as well.

Else:

\begin{verse}
If $g\neq 0$: let $R'$ such that $g(R') = g$. This rule can move upwards above its left premise.  
The number of rules and the sum stand the same.
In its left part, we change a $\odot_e$ by a $\odot_e$) and $g$ decrease of 1. Hence the proof $\delta'$ is such that $|\delta'|<|\delta|$. By nduction, $\delta'$ has a unique normal form, then $\delta$ as well.

Else: using the property \ref{propnormalisation}, the proof is in normal form.
\end{verse}
\end{verse}

$\odot_e$ can appear only after any rule with two premises which links its two hypothesis $A$ and $B$.
Moreover, only one $\odot_e$ can be "as high as possible" after this rule use.
Indeed, a rule can connect only two hypotheses: the rightmost one of the left premise and the leftmost one  of the right premise..
We give a unique position to each $\odot_e$ thus the normal form is unique.

\end{proof}

All proofs have a unique normal form which could be performed using the strategy inside the proof above. 
Normal forms of the two previous examples, figure \ref{exformnorm} are the two following proofs:

\begin{center}
$\begin{array}{l|l}
$example 1$& $example 2$\\
\prooftree
	\vdash E \odot F 
	\prooftree
		\vdash C
		\quad \prooftree
			C, E, F \vdash A \lfrom  B
			\justifies
			E, F \vdash C \lto (A \lfrom  B)
			\using
		             [\lto_{i}]
         			\thickness=0.07em
		\endprooftree
		\justifies
		\vdash  A \lfrom  B
		\using
		   [\lto_{e}]
	\endprooftree
	\justifies
	\vdash C \lto (A \lfrom  B)
	\using
	   [\odot_{e}]
\endprooftree
&
\prooftree
	\vdash C
	\prooftree
		\vdash E \odot F
		\quad \prooftree
			E \vdash (C \lto A) \lfrom  B
			\quad F \vdash B
			\justifies
			E, F \vdash C \lto A
			\using
		             [\lfrom _{e}]
         			\thickness=0.07em
		\endprooftree
		\justifies
		\vdash  A \lfrom  B
		\using
		   [\odot_{e}]
	\endprooftree
	\justifies
	E, F \vdash A
	\using
	   [\lfrom _{e}]
\endprooftree
\\
g(\lto_e) = 0 & h(\odot_e) = 0\\
\end{array}$
\end{center}

\subsection{Sub-formula property for L$_{\odot}$}

\begin{theo}\label{prooftruc}
All proof of Lambek calculus with product on normal form $\delta$ of a sequent $\Gamma \vdash C$ holds the sub-formula property:
every formulae in a normal proof are sub-formula of some hypotheses $\Gamma$ or the conclusion of the proof $C$.
\end{theo}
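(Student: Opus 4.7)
My plan is to prove this by induction on the size $n(\delta)$ of the normal proof. The base case, where $\delta$ is an axiom $A \vdash A$, is immediate. For the inductive step I distinguish according to the last rule of $\delta$. When the last rule is an introduction ($\lto_i$, $\lfrom_i$, or $\odot_i$), every formula in the premise end-sequent(s) is either a hypothesis of the end-sequent of $\delta$ or a direct sub-formula of the introduced formula $C$; applying the inductive hypothesis to the sub-proof(s) --- each of which is smaller and normal --- yields the result immediately.

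The delicate case is when the last rule is an elimination, because the principal formula of the elimination (for instance $A \lto C$ in $\lto_e$, or $A \odot B$ in $\odot_e$) is more complex than $C$ and must be bounded by a sub-formula of some hypothesis. My approach is to trace the principal branch $B(S_0)$ upward from the end sequent $S_0$. By its definition this branch travels through major premises of implication eliminations (where the right-hand side strictly grows going up), through context premises of $\odot_e$ (where the right-hand side is preserved), and through the unique premises of unary introductions (where the right-hand side strictly shrinks going up). The normality conditions --- absence of classical redex, of $k$-extended-redex, and $\odot_e$ as high as possible --- exclude any configuration where a unary introduction on the branch has the same right-hand side as an elimination lying below it on the branch. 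Hence, starting from the end sequent, the branch consists of a (possibly empty) initial segment of introductions followed by a segment of implication eliminations and $\odot_e$'s, terminating at an axiom $F \vdash F$. On the introduction segment every right-hand side is a sub-formula of $C$; on the elimination segment every right-hand side is a sub-formula of $F$. Tracing $F$ back down along the left-hand sides of the branch, $F$ appears as a hypothesis of the end sequent, unless it is discharged by some $\odot_e$ on the branch as one of its $A_0$ or $B_0$, in which case I recurse on the corresponding smaller minor-premise sub-proof.

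The main obstacle lies in the $\odot_e$ case, since its minor premise $\Delta \vdash A \odot B$ is not on the main principal branch, so the product $A \odot B$ requires a separate bound. I resolve this by iterating the principal-branch argument inside $\delta_0$, the minor-premise sub-proof. Because $\delta_0$ is itself normal, its last rule cannot be $\odot_i$ producing $A \odot B$ (that would be a classical product redex with the $\odot_e$ below); so the upward tracing in $\delta_0$ again terminates at an axiom whose formula is a super-formula of $A \odot B$ and lies in $\Delta$, hence in the end-sequent hypotheses of $\delta$. Every remaining formula in the proof is then bounded by direct application of the inductive hypothesis to each premise sub-proof of the last rule, which completes the induction.
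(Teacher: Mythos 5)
Your proposal is correct in substance and follows essentially the same route as the paper: climb the principal branch above the final elimination, use the absence of (extended) redexes to rule out introductions along it, iterate through chains of $\odot_e$, and bound the product formula $A\odot B$ of a $\odot_e$ by recursing into its minor premise. The paper packages the climb as a strengthened induction hypothesis (after an implicative elimination every formula is a sub-formula of the hypotheses alone) plus a local case analysis on the rule above the major premise, whereas you make the track structure of the branch explicit and trace the top formula back down; these are presentational variants of the same argument.

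Two of your intermediate claims are stated too strongly, though neither endangers the conclusion. First, the principal branch need not terminate at an axiom $F\seq F$: it is not closed under the binary introduction $\odot_i$, and a normal proof may end with a chain of $\odot_e$ whose topmost context premise is the conclusion of a $\odot_i$ whose two premises split the pair $A,B$ --- this is exactly the ``as high as possible'' configuration, which is not an extended redex (the $\odot_e$ eliminates $A\odot B$, not the introduced product) and so survives normalisation. In that situation the branch ends at the $\odot_i$, but harmlessly so: since only $\odot_e$'s lie below it on the branch, every right-hand side along the branch equals the end conclusion, so the sub-formula bound is immediate; your case analysis should record this possibility (note also that when the last rule is an implicative elimination the mismatch of main connectives already excludes a $\odot_i$ at the top, so there the axiom claim is fine). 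Second, the inference ``$F$ lies in $\Delta$, hence in the end-sequent hypotheses of $\delta$'' is too quick: a formula of $\Delta$ injected into the contexts at a $\odot_e$ higher on the branch can itself be consumed as one of the $A,B$ of a lower $\odot_e$, so it need not survive to the end sequent; this is covered by the same recursion into minor premises that you already invoke, but it should be said that the recursion applies at every level, with termination guaranteed by the size induction. With these two points made explicit, your proof is complete and matches the paper's.
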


\begin{proof} We proceed by induction :

We use a stronger definition of the sub-formula property:
every formulae in a normal sub-proof are formulae of some hypotheses or the conclusion of the proof and if the last rule used is an $\lto_e$ or $\lfrom _e$ every sub-formulae are sub-formulae of some hypotheses only.
Note that the axiom rule clearly satisfies the property of the sub-formula because the sequent is then one of the hypotheses.

We check the induction hypothesis after the use of each rule.

\begin{enumerate}
\item $\lto_e$:
let the proof $\delta$, where $\Gamma_i$ is the set of hypotheses used in the sub-proof $\delta_i$, for $i\in[2]$:
\begin{center}
\prooftree 
	\prooftree \Gamma_1 \leadsto 
	 C
	 \using \delta_1
	 \proofdotnumber=5
	\endprooftree 	
	  \prooftree \Gamma_2\leadsto 
	\prooftree 
		\justifies
		C\lto D
		\using
         		    [R]
         		\thickness=0.07em
	\endprooftree 
	\using \delta_2
	\proofdotnumber=5
	\endprooftree 
	\justifies
	D
	\using
             [\lto_e]
         \thickness=0.07em
\endprooftree 
\end{center}

Using the induction hypothesis:
\begin{itemize}
\item In $\delta_1$ every formulae are sub-formulae of $C$ or $\Gamma_1$;
\item In $\delta_2$ every formulae are sub-formulae of  $C\lto D$ or $\Gamma_2$.
\end{itemize}
The conclusion $D$ and the premise $C$ are direct sub-formulae of the premise $C \lto D$.
We have to check the rule $[R]$ above this premise:

\begin{itemize}
\item[$\circ$] if $R$ is $\lfrom _e$ or $\lto_e$: we use the induction hypothesis, we conclude that $C\lto D$ is a sub-formula of $\Gamma_2$.
Then every formula of $\delta$ is a sub-formula of $\Gamma_2$.

\item[$\circ$]  if $R$ is $\lto_i$: it is impossible because the rule should be a $0$-extended-redex, or $\delta$ is in normal form.
\item[$\circ$]  if $R$ is $\lfrom _i$: this case is structurally impossible because we could not derive $C \lto D$ with this rule.
\item[$\circ$]  if $R$ is $\odot_i$: this case is also impossible because we could not derive $C \lto D$ with this rule.
\item[$\circ$]  if $R$ is $\odot_e$. Once again, we must check the property with the rule $R'$ above:
\begin{center}
\prooftree 
	\prooftree \Gamma_1 \leadsto 
	 C
	 \using \delta_1
	 \proofdotnumber=5
	\endprooftree 	
	\quad 	\prooftree \Gamma_2[A,B]\leadsto 
	\prooftree 
		A\odot B
		\quad \prooftree 
			\justifies
			 C\lto D
			\using
         			   [R']
         		\thickness=0.07em
		\endprooftree 
		\justifies
		 C\lto D
		\using
         		   [\odot_e]
         	\thickness=0.07em
	\endprooftree 
	\using \delta_2
	\proofdotnumber=5
	\endprooftree 
	\justifies
	D
	\using
             [\lto_e]
         \thickness=0.07em
\endprooftree 

\bigskip
\end{center}

If $R'$ is $\lto_e$ or $\lfrom _e$, by the induction hypothesis, $C\lto D$ is a  sub-formula of some hypotheses.

If $R'$ is $\lto_i$: impossible because it should introduce a $1$-extended-redex, or $\delta$ is  in normal form... then it is not possible.

If $R'$ is one of the other introduction rules ($\lto_i$ or $\odot_i$): 
these case are structurally impossible. We could not derive $C \lto D$ with this rule.

If $R'$ is  $\odot_e$, once again, we check the property on the rule above
Remark that the number of rules above this rule is finite and they constitute a sequence such that:

 \begin{center}
\prooftree 
	\prooftree \Gamma_1 \leadsto 
	 C
	 \using \delta_1
	 \proofdotnumber=5
	\endprooftree 	
	\quad 	
	\prooftree 
		A_1\odot B_1
		\prooftree 
			A_n\odot B_n
			\prooftree 
			\prooftree  \Gamma_2[A_1, \cdots, A_n,B1, \cdots, B_n]\leadsto 
			\using \delta_2
			\endprooftree 
			\justifies
			 C\lto D
			\using
         			   [R]
         			\thickness=0.07em
		\endprooftree 
		\justifies
		\prooftree 
		\leadsto
		 C\lto D
		\endprooftree 
		\using
         		   [\odot_e]
         		\thickness=0.07em
	\endprooftree 
	\justifies
	 C\lto D
	\using
         	  [\odot_e]
         	\thickness=0.07em
	\endprooftree 
	\justifies
	D
	\using
             [\lto_e]
         \thickness=0.07em
\endprooftree 
\end{center}

In this case, we have:
\begin{itemize}

\item either there is only $\odot_e$ in this sequence then $C \lto D$ is one of the hypotheses.
\item either it exist a rules $R''$ different than $\odot_e$ inside then using the argument of the corresponding case above we prove that $C\lto D$ is sub-formula oh the hypotheses.
\end{itemize}
\end{itemize}
In every case, the conclusion of $\lto_e$ is a sub-formula of the hypotheses. The induction hypothesis is checked.

\item $\lfrom _e$:
let the proof $\delta$, where $\Gamma_i$ is the set of hypotheses used in the sub-proof $\delta_i$, for $i\in[2]$:

\begin{center}
\prooftree 	
	\prooftree 
		\Gamma_2 
		\leadsto 
		\prooftree 
			\justifies
			D\lfrom  C
			\using
         			    [R]
         			\thickness=0.07em
		\endprooftree 
		\using \delta_2
		\proofdotnumber=5
	\endprooftree 
	\quad \prooftree \Gamma_1 \leadsto 
	 C
	 \using \delta_1
	\proofdotnumber=5
	\endprooftree 
	\justifies
	D
	\using
             [\lto_e]
         \thickness=0.07em
\endprooftree 
\end{center}

The conclusion $D$ and the premise $C$ are direct sub-formulae of the premise $D \lfrom  C$.
We have to check the rule $[R]$ above this premise whose conclusion is  $D \lfrom  C$:
this case is the same as $\lto_e$. 
In the same way, we prove that $D\lfrom C$ is sub-formula of  $\Gamma_2$.

\item $\lto_i$:
let the proof $\delta$,  where $\Gamma_1$ is the set of hypotheses used in the sub-proof $\delta_1$:
\begin{center}
\prooftree 
	\prooftree C, \Gamma_1 \leadsto 
	 	D
	 	\using \delta_1
	 	\proofdotnumber=5
	\endprooftree 	
	\justifies
	C \lto D
	\using
             [\lto_i]
         \thickness=0.07em
\endprooftree 
\end{center}

In $\delta_1$ every formula is a sub-formula of $D$ or of $C$ and $\Gamma_1$.
Furthermore, $D$ is a sub-formula of $C\lto D$.
Then, every formula of $\delta$ is sub-formula of $C,\Gamma_1$ or $C\lto D$.

\item $\lfrom _i$:
let the proof $\delta$,  where $\Gamma_1$ is the set of hypotheses used in the sub-proof $\delta_1$:
\begin{center}
\prooftree 
	\prooftree C, \Gamma_1 \leadsto 
	 D
	 \using \delta_1
 	\proofdotnumber=5
	\endprooftree 	
	\justifies
	D \lfrom  C
	\using
             [\lfrom _i]
         \thickness=0.07em
\endprooftree 
\end{center}

This case is strictly symmetrical to that of $\lto_i$: $D$ is a sub-formule of $D \lfrom  C$, and every formula of $\delta_1$ is sub-formula of $C, \Gamma_1$ or $D$. By trnasitivity, any formula of $\delta$ is sub-formula of $C, \Gamma$ or $D\lfrom C$.

\item $\odot_i$: let the proof $\delta$, where $\Gamma_i$ is the set of hypotheses used in the sub-proof $\delta_i$, for $i\in[2]$:

\begin{center}
\prooftree 
	\prooftree \Gamma_1 \leadsto 
		 C
		 \using \delta_1
	 	\proofdotnumber=5
	\endprooftree 	
	\quad \prooftree \Gamma_2 \leadsto 
	 	D
	 	\using \delta_2
	 	\proofdotnumber=5
	\endprooftree 	
	\justifies
	C \odot D
	\using
             [\odot_i]
         \thickness=0.07em
\endprooftree 
\end{center}

\begin{itemize}
\item In $\delta_1$ every formula is a sub-formula of $C$ or of $\Gamma_1$.
\item In $\delta_2$ every formula is a sub-formula of $D$ or of $\Gamma_2$.
\end{itemize}

Futhermore , $C$ and $D$ are sub-formulae of $C\odot D$. 
By transitivity, every formula of $\delta$ is sub-formula of $\Gamma_1, \Gamma_2$ or of $C\odot D$.

\item $\odot_e$: let the proof $\delta$, where $\Gamma_i$ is the set of hypotheses used in the sub-proof $\delta_i$, for $i\in[2]$:
\begin{center}
\prooftree 
	\prooftree \Gamma_1 \leadsto 
		 A \odot B
		 \using \delta_1
	 	\proofdotnumber=5
	\endprooftree 	
	\quad \prooftree \Gamma_2 \leadsto 
	 	D
		 \using \delta_2
	 	\proofdotnumber=5
	\endprooftree 	
	\justifies
	D
	\using
             [\odot_e]
         \thickness=0.07em
\endprooftree 
\end{center}

\begin{itemize}
\item In $\delta_1$, every formula is a sub-formula of $A\odot B$ or of $\Gamma_1$.
\item In $\delta_2$, every formula is a sub-formula of $D$ or of $\Gamma_2$.
\end{itemize}

The conclusion of $\delta$ is the conclusion of one premise, then the property stands for the part of the proof which the conclusion belongs, \textit{i.e.} $\delta_2$.
We check the property for the second part of the proof. here, we have to check that $A \odot B$ is sub-formula of hypothesis of $\delta_1$. For this, we analyze the rule $R$ above:
\begin{itemize}
\item[$\circ$] if $R$ is $\lto_e$ ou $\lfrom _e$, by induction and th type of the rule, $A\odot B$ is sub-formula of de $\Gamma_1$
\item[$\circ$] if $R$ is $\lto_i$ ou $\lfrom _i$: this case is structurally impossible because $A \odot B$ could not be produced after these lasts.
\item[$\circ$] if $R$ is $\odot_i$: this case is absurd because the succession of these two rules imply a $0$-extended-redex, but the proof is in normal form.
\item[$\circ$] if $R$ is $\odot_e$. We check the property for the rule $R$ above: 

$\;$
\begin{center}
\prooftree 
	
	 \prooftree 
	 	E \odot F
		\quad
			\prooftree  \Gamma_1[E,F] \leadsto  A \odot B
				\using \delta_1
			 	\proofdotnumber=5
			\endprooftree 
		\justifies
	 	A \odot B
		\using
	             [\odot_e]
	         \thickness=0.07em
	\endprooftree 
	\quad \prooftree \Gamma_2[A,B] \leadsto  D
		\using \delta_2
	 	\proofdotnumber=5
	\endprooftree 
	\justifies
	D
	\using
             [\odot_e]
         \thickness=0.07em
\endprooftree 
\end{center}

\bigskip

\begin{itemize}
\item[$\bullet$] If $A\odot B$ is not an hypothesis: in this case, a conjoined rule $\odot_i$ exists for the analyzed rule. 
This implies that a $k$-extended-redex is in the proof, but we assume that it is in normal form

\item[$\bullet$] If $A\odot B$ is an hypothesis of $\Gamma_1$, then $A\odot B$ is a sub-formula of hypothesis.
\end{itemize}
\end{itemize}

In any possible case, $A\odot B$ is sub-formula of hypothesis.
Thereby, the sub-formula property stands for $\odot_e$.
\end{enumerate}
\end{proof}

In L$_{\odot}$, every proof have a unique normal form which check the sub-formula property.
We observe that unlike \cite{Negri02nd}, rules use are the usual one for this calculus.

\section{Normalization of proofs of PCMLL}

Now, we present a normalization for proofs of PCMLL, from which we prove that the sub-formula property holds.
In the same way for L$_{\odot}$, the normalization uniquely positioned eliminations of non-commutative product, and build sequence of commutative product eliminations.
The relative position of a commutative product removal in a sequence is not unique

\subsection{Property of PCMLL}

\begin{propri}[product eliminations could rise in the proof]  \label{product-upwards}

Let $R$ a product elimination $\otimes_e$ (\resp a rule $\odot_e$) of $\Gamma[\Delta]\seq C$ between a proof $\delta_0$  $\Delta\seq A\otimes B$ and a proof with conclusion $\Gamma[(A,B)]\seq C$ (\resp $\Gamma[\langle A;B\rangle]\seq C$) obtained by a rule $R'$ of a proof $\delta_1$ of a sequent $\Theta[(A,B)]\seq X$ 
(\resp $\Theta[\langle A;B\rangle]\seq X$) (and if $R'$ is a binary rule of a second proof $\delta_2$ with $\Psi\seq U$ as conclusion). 

Then, we can obtain a proof for the same sequent $\Gamma[\Delta]\seq C$ by first applying the rule $\otimes_e$ (\resp the rule $\odot_e$) between the proof $\delta_0$ with $\Delta\seq A\otimes B$ as conclusion and the proof $\delta_1$ with $\Theta[(A,B)]\seq X$ as conlusion  (\resp $\Theta[\langle A;B\rangle]\seq X$) giving the sequent $\Theta[\Delta]\seq X$. By applying the rule $R'$ to this new proof and  potentially the proof $\delta_2$, we get the same sequent $\Gamma[\delta]\seq C$. 

This derivation is presented in figure \ref{montprod}
\end{propri}

\begin{figure}[htbp]
\begin{center}

\prooftree
	\prooftree
		\leadsto
		\Delta \seq A\otimes B
		\using
		  \delta_0
	\endprooftree
	\quad 
	\prooftree
		\prooftree
			\leadsto
			\Psi \seq U
			\using
			  \delta_2			
		\endprooftree
		\prooftree
			\leadsto
			\Theta[(A,B)] \seq X
			\using
			  \delta_1			
		\endprooftree
		\justifies
		\Gamma[(A,B)] \seq C
		\using R'
	\endprooftree
	\justifies
	\Gamma[\Delta] \seq C
	\using R
\endprooftree
$\quad \Rightarrow \quad$
\prooftree
	\prooftree
		\leadsto
		\Psi \seq U
		\using
		  \delta_2			
	\endprooftree
	\prooftree
		\prooftree
			\leadsto
			\Delta \seq A\otimes B
			\using
			  \delta_0
		\endprooftree
		\quad 
		\prooftree
			\leadsto
			\Theta[(A,B)] \seq X
			\using
			  \delta_1			
		\endprooftree	
		\justifies
		\Theta[\Delta] \seq X
		\using R
	\endprooftree
	\justifies
	\Gamma[\Delta] \seq C
	\using R'
\endprooftree
	
\end{center}
\caption{Proof's structure allowing the rise of the product in PCMLL. \label{montprod}}
\end{figure}

\begin{proof}

The proof is similar to property \ref{proofriselambek}.
This is a case study according to rule over the product elimination.
This elimination could only rise when hypothesis which must be cancelled are in the same premise and hold their respective position according order required by the elimination.

Let check every cases for $\otimes_e$.

\begin{itemize}
\item[$\circ$] Rise over $\lto_e$.
\begin{itemize}
\item[$\bullet$] if hypothesis are in the left premise of $\lto_e$:

\bigskip
 \prooftree 
 	\Delta \vdash A \otimes B
	\quad \prooftree 
		\Gamma[(A,B)] \vdash D
		\quad \Phi \vdash  D \lto C
		\justifies
		<\Gamma[(A,B)] ; \Phi> \vdash C 
		\using
	             [\lto_{e}]
	         \thickness=0.07em
	\endprooftree 
	\justifies
	<\Gamma[\Delta]; \Phi> \vdash C 
	\using
             [\otimes_{e}]
         \thickness=0.07em
\endprooftree 

\begin{flushright}
$\Rightarrow$
\prooftree 
	 \prooftree 
 		\Delta \vdash A \otimes B
		\quad \Gamma[(A,B)] \vdash D
		\justifies
		\Gamma[\Delta] \vdash D
		\using
	             [\otimes_{e}]
	         \thickness=0.07em
	\endprooftree 
	\Phi \vdash  D \lto C
	\justifies
	<\Gamma[\Delta]; \Phi> \vdash C 
	\using
             [\lto_{e}]
         \thickness=0.07em
\endprooftree 
\end{flushright}
\bigskip
\item[$\bullet$] if hypothesis are in the right premise of $\lto_e$:\\
\bigskip
 \prooftree 
 	\Delta \vdash A \otimes B
	\quad \prooftree 
		\Gamma \vdash D
		\quad \Phi[(A,B)] \vdash  D \lto C
		\justifies
		<\Gamma ; \Phi[(A,B)]> \vdash C 
		\using
	             [\lto_{e}]
	         \thickness=0.07em
	\endprooftree 
	\justifies
	<\Gamma; \Phi[\Delta]> \vdash C 
	\using
             [\otimes_{e}]
         \thickness=0.07em
\endprooftree 

\begin{flushright}
$\Rightarrow$
\prooftree 
	\quad \Gamma \vdash  D
	 \prooftree 
 		\Delta \vdash A \otimes B
		\quad \Phi[(A,B)] \vdash  D \lto C
		\justifies
		\Phi[\Delta] \vdash D \lto C
		\using
	             [\otimes_{e}]
	         \thickness=0.07em
	\endprooftree 
	\justifies
	<\Gamma; \Phi[\Delta]> \vdash C 
	\using
             [\lto_{e}]
         \thickness=0.07em
\endprooftree
\end{flushright}

\end{itemize}
\bigskip

\item[$\circ$] Rise over $\lfrom _e$.
\begin{itemize}

\item[$\bullet$] if hypothesis are in the right premise of $\lfrom _e$:
\bigskip

 \prooftree 
 	\Delta \vdash A \otimes B
	\quad \prooftree 
		\Phi \vdash  C \lfrom  D
		\quad \Gamma[(A,B)] \vdash D
		\justifies
		<\Phi ; \Gamma[(A,B)] > \vdash C 
		\using
	             [\lfrom _{e}]
	         \thickness=0.07em
	\endprooftree 
	\justifies
	<\Phi ; \Gamma[\Delta]> \vdash C 
	\using
             [\otimes_{e}]
         \thickness=0.07em
\endprooftree
\begin{flushright}
$\Rightarrow$
\prooftree 
	\Phi \vdash C \lfrom  D
	\quad  \prooftree 
 		\Delta \vdash A \otimes B
		\quad \Gamma[(A,B)] \vdash D
		\justifies
		\Gamma[\Delta] \vdash D
		\using
	             [\otimes_{e}]
	         \thickness=0.07em
	\endprooftree 
	\justifies
	< \Phi ; \Gamma[\Delta]> \vdash C 
	\using
             [\lfrom _{e}]
         \thickness=0.07em
\endprooftree 
\end{flushright}
\bigskip
\item[$\bullet$] if hypothesis are in the left premise of $\lfrom _e$:

\bigskip
 \prooftree 
 	\Delta \vdash A \otimes B
	\quad \prooftree 
		\Phi[(A,B)] \vdash C \lfrom  D
		\quad \Gamma \vdash D
		\justifies
		<\Phi[(A,B)] ; \Gamma> \vdash C 
		\using
	             [\lfrom _{e}]
	         \thickness=0.07em
	\endprooftree 
	\justifies
	<\Phi[\Delta] ; \Gamma> \vdash C 
	\using
             [\otimes_{e}]
         \thickness=0.07em
\endprooftree 

\begin{flushright}
$\Rightarrow$
\prooftree 
	\prooftree 
 		\Delta \vdash A \otimes B
		\quad \Phi[(A,B)] \vdash C \lfrom  D
		\justifies
		\Phi[\Delta] \vdash C \lfrom  D
		\using
	             [\otimes_{e}]
	         \thickness=0.07em
	\endprooftree 
	\Gamma \vdash  D
	\justifies
	<\Phi[\Delta] ; \Gamma> \vdash C 
	\using
             [\lfrom _{e}]
         \thickness=0.07em
\endprooftree 
\end{flushright}
\end{itemize}
\bigskip

\item[$\circ$] Rise over $\llto_e$

\bigskip
\begin{itemize}
\item[$\bullet$] if hypothesis are in the left premise of $\llto_e$:

\bigskip
 \prooftree 
 	\Delta \vdash A \otimes B
	\quad \prooftree 
		\Gamma[(A,B)] \vdash D
		\quad \Phi \vdash  D \llto C
		\justifies
		(\Gamma[(A,B)] , \Phi) \vdash C 
		\using
	             [\llto_{e}]
	         \thickness=0.07em
	\endprooftree 
	\justifies
	(\Gamma[\Delta]; \Phi) \vdash C 
	\using
             [\otimes_{e}]
         \thickness=0.07em
\endprooftree 

\begin{flushright}
$\Rightarrow$
\prooftree 
	 \prooftree 
 		\Delta \vdash A \otimes B
		\quad \Gamma[(A,B)] \vdash D
		\justifies
		\Gamma[\Delta] \vdash D
		\using
	             [\otimes_{e}]
	         \thickness=0.07em
	\endprooftree 
	\Phi \vdash  D \llto C
	\justifies
	(\Gamma[\Delta], \Phi) \vdash C 
	\using
	   [\llto_{e}]
         \thickness=0.07em
\endprooftree 
\end{flushright}
\bigskip

\item[$\bullet$] if hypothesis are in the right premise of $\llto_e$:

\bigskip
 \prooftree 
 	\Delta \vdash A \otimes B
	\quad \prooftree 
		\Gamma \vdash D
		\quad \Phi[(A,B)] \vdash  D \llto C
		\justifies
		(\Gamma , \Phi[(A,B)]) \vdash C 
		\using
	             [\llto_{e}]
	         \thickness=0.07em
	\endprooftree 
	\justifies
	(\Gamma , \Phi[\Delta]) \vdash C 
	\using
             [\otimes_{e}]
         \thickness=0.07em
\endprooftree 

\begin{flushright}
$\Rightarrow$
\prooftree 
	\quad \Gamma \vdash  D
	 \prooftree 
 		\Delta \vdash A \otimes B
		\quad \Phi[(A,B)] \vdash  D \llto C
		\justifies
		\Phi[\Delta] \vdash D \llto C
		\using
	             [\otimes_{e}]
	         \thickness=0.07em
	\endprooftree 
	\justifies
	(\Gamma, \Phi[\Delta]) \vdash C 
	\using
             [\llto_{e}]
         \thickness=0.07em
\endprooftree 
\end{flushright}
\end{itemize}
\bigskip

\item[$\circ$] Rise over $\lfrom _i$:

\bigskip
 \prooftree 
 	\Delta \vdash A \otimes B
	\quad \prooftree 
		<\Gamma[(A,B)] ; D> \vdash C
		\justifies
		\Gamma[(A,B)] \vdash C \lfrom  D
		\using
	             [\lfrom _{i}]
	         \thickness=0.07em
	\endprooftree 
	\justifies
	\Gamma[\Delta] \vdash C \lfrom  D
	\using
             [\otimes_{e}]
         \thickness=0.07em
\endprooftree 
$\Rightarrow$
\prooftree 
	 \prooftree 
 		\Delta \vdash A \otimes B
		\quad <\Gamma[(A,B)] ; D> \vdash C
		\justifies
		<\Gamma[\Delta ] ; D> \vdash C
		\using
	             [\otimes_{e}]
	         \thickness=0.07em
	\endprooftree 
	\justifies
	\Gamma[\Delta] \vdash C \lfrom  D
	\using
	   [\lfrom _{i}]
         \thickness=0.07em
\endprooftree 

\bigskip

\item[$\circ$] Rise over $\lto_i$: $\;$

\bigskip
 \prooftree 
 	\Delta \vdash A \otimes B
	\quad \prooftree 
		<D ; \Gamma[(A,B)] > \vdash C
		\justifies
		\Gamma[(A,B)] \vdash D \lto C
		\using
	             [\lto_{i}]
	         \thickness=0.07em
	\endprooftree 
	\justifies
	\Gamma[\Delta] \vdash D \lto C 
	\using
             [\otimes_{e}]
         \thickness=0.07em
\endprooftree 
$\Rightarrow$
\prooftree 
	 \prooftree 
 		\Delta \vdash A \otimes B
		\quad <D ; \Gamma[(A,B)] > \vdash C
		\justifies
		<D ; \Gamma[\Delta ]> \vdash C
		\using
	             [\lto_{i}]
	         \thickness=0.07em
	\endprooftree 
	\justifies
	\Gamma[\Delta] \vdash D \lto C
	\using
	   [\lto_{i}]
         \thickness=0.07em
\endprooftree 

\bigskip

\item[$\circ$] Rise over $\llto_i$:

\bigskip
 \prooftree 
 	\Delta \vdash A \otimes B
	\quad \prooftree 
		(\Gamma[(A,B)] , D) \vdash C
		\justifies
		\Gamma[(A,B)] \vdash D \llto C
		\using
	             [\llto_{i}]
	         \thickness=0.07em
	\endprooftree 
	\justifies
	\Gamma[\Delta] \vdash D \llto C 
	\using
             [\otimes_{e}]
         \thickness=0.07em
\endprooftree 
$\Rightarrow$
\prooftree 
	 \prooftree 
 		\Delta \vdash A \otimes B
		\quad (\Gamma[(A,B)] , D) \vdash C
		\justifies
		(\Gamma[\Delta ] , D) \vdash C
		\using
	             [\otimes_{e}]
	         \thickness=0.07em
	\endprooftree 
	\justifies
	\Gamma[\Delta] \vdash D \llto C
	\using
	   [\llto_{i}]
         \thickness=0.07em
\endprooftree 

\bigskip

\item[$\circ$] Rise over $\otimes_e$:

\bigskip
\begin{itemize}
\item[$\bullet$] if hypothesis are in the right premise of $\otimes_e$:

\bigskip
\prooftree 
	\Gamma \vdash A \otimes B
	\prooftree 
		\Delta \vdash C \otimes D
		\quad (\Phi, (A,B),(C,D), \Phi ' )\vdash E
		\justifies
		(\Phi, (A,B),\Delta, \Phi ') \vdash E
		\using
	             [\otimes_{e}]
	         \thickness=0.07em
	\endprooftree 
	\justifies
	(\Phi, \Gamma,\Delta, \Phi ' ) \vdash E
	\using
             [\otimes_{e}]
         \thickness=0.07em
\endprooftree 

\begin{flushright}
$\Rightarrow$
\prooftree 
	\Delta \vdash C \otimes D
	\prooftree 
		\Gamma \vdash A \otimes B
		\quad (\Phi, (A,B),(C,D), \Phi ' ) \vdash E
		\justifies
		(\Phi, \Gamma,(C,D), \Phi ') \vdash E
		\using
	             [\otimes_{e}]
	         \thickness=0.07em
	\endprooftree 
	\justifies
	(\Phi, \Gamma,\Delta, \Phi ' ) \vdash E
	\using
             [\otimes_{e}]
         \thickness=0.07em
\endprooftree 
\end{flushright}
\bigskip

\item[$\bullet$] if hypothesis are in the left premise of $\otimes_e$:

\bigskip
\prooftree 
	\Gamma \vdash A \otimes B
	\prooftree 
		(\Delta, (A,B), \Delta ' )\vdash C \otimes D
		\quad (\Phi, (C,D), \Phi ' )\vdash E
		\justifies
		(\Phi, \Delta, (A,B), \Delta ', \Phi ') \vdash E
		\using
	             [\otimes_{e}]
	         \thickness=0.07em
	\endprooftree 
	\justifies
	(\Phi, \Delta, \Gamma, \Delta ', \Phi ' ) \vdash E
	\using
             [\otimes_{e}]
         \thickness=0.07em
\endprooftree 
\bigskip

\begin{flushright}
$\Rightarrow$
\prooftree 
	\prooftree 
		\Gamma \vdash A \otimes B
		\quad (\Delta, (A,B), \Delta ' ) \vdash C \otimes D
		\justifies
		(\Delta, \Gamma, \Delta ' ) \vdash C \otimes D
		\using
	             [\otimes_{e}]
	         \thickness=0.07em
	\endprooftree 
	(\Phi, (C,D), \Phi ') \vdash E
	\justifies
	(\Phi, \Delta, \Gamma, \Delta, \Phi ' ) \vdash E
	\using
             [\otimes_{e}]
         \thickness=0.07em
\endprooftree 
\end{flushright}
\end{itemize}
\bigskip

\item[$\circ$] Rise over $\otimes_i$:
\begin{itemize}
\item[$\bullet$] if hypothesis are in the left premise of $\otimes_i$:

\bigskip
\prooftree 
	\Gamma \vdash A \otimes B
	\prooftree 
		(\Delta, (A,B), \Delta ' ) \vdash C
		\quad \Phi \vdash D
		\justifies
		(\Delta, (A,B), \Delta ', \Phi) \vdash C \otimes D
		\using
	             [\otimes_{i}]
	         \thickness=0.07em
	\endprooftree 
	\justifies
	(\Delta, \Gamma, \Delta ', \Phi )\vdash C \otimes D
	\using
             [\otimes_{e}]
         \thickness=0.07em
\endprooftree 

\begin{flushright}
$\Rightarrow$
\prooftree 
	\prooftree 
		\Gamma \vdash A \otimes B
		\quad (\Delta, (A,B), \Delta ' ) \vdash C
		\justifies
		(\Delta, \Gamma, \Delta ' ) \vdash C
		\using
	             [\otimes_{e}]
	         \thickness=0.07em
	\endprooftree 
	\quad \Phi \vdash D
	\justifies
	(\Delta, \Gamma , \Delta ', \Phi ) \vdash C \otimes D
	\using
            [\otimes_{i}]
         \thickness=0.07em
\endprooftree 
\end{flushright}
\bigskip

\item[$\bullet$] if hypothesis are in the right premise of $\otimes_i$:

\bigskip
\prooftree 
	\Gamma \vdash A \otimes B
	\prooftree 
		\Delta \vdash C
		\quad (\Phi, (A,B), \Phi ') \vdash D
		\justifies
		(\Delta, \Phi, (A,B) , \Phi ') \vdash C \otimes D
		\using
	             [\otimes_{i}]
	         \thickness=0.07em
	\endprooftree 
	\justifies
	(\Delta, \Phi, \Gamma, \Phi ' ) \vdash C \otimes D
	\using
             [\otimes_{e}]
         \thickness=0.07em
\endprooftree 

\begin{flushright}
$\Rightarrow$
\prooftree 
	\Delta \vdash C
	\quad \prooftree 
		\Gamma \vdash A \otimes B
		\quad  (\Phi, (A,B), \Phi ') \vdash D
		\justifies
		(\Phi, \Gamma, \Phi ' ) \vdash D
		\using
	             [\otimes_{e}]
	         \thickness=0.07em
	\endprooftree 
	\justifies
	(\Delta, \Phi, \Gamma, \Phi ' ) \vdash C \otimes D
	\using
            [\otimes_{i}]
         \thickness=0.07em
\endprooftree 
\end{flushright}
\end{itemize}
\bigskip

\item[$\circ$] Rise over $\odot_e$:
\begin{itemize}
\item[$\bullet$] if hypothesis are in the right premise of$\odot_e$: 

\bigskip
\prooftree 
	\Gamma \vdash A \otimes B
	\prooftree 
		\Delta \vdash C \odot D
		\quad (\Phi, (A,B),\Psi, <C;D>, \Psi ', \Phi ' )\vdash E
		\justifies
		(\Phi, (A,B), \Psi, \Delta, \Psi ', \Phi ') \vdash E
		\using
	             [\odot_{e}]
	         \thickness=0.07em
	\endprooftree 
	\justifies
	(\Phi, \Gamma, \Psi, \Delta, \Psi ', \Phi ' ) \vdash E
	\using
             [\otimes_{e}]
         \thickness=0.07em
\endprooftree 

\begin{flushright}
$\Rightarrow$
\prooftree 
	\Delta \vdash C \odot D
	\prooftree 
		\Gamma \vdash A \otimes B
		\quad (\Phi, (A,B), \Psi, <C;D>, \Psi ', \Phi ' ) \vdash E
		\justifies
		(\Phi, \Gamma,\Psi, <C;D>, \Psi ', \Phi ') \vdash E
		\using
	             [\otimes_{e}]
	         \thickness=0.07em
	\endprooftree 
	\justifies
	(\Phi, \Gamma,\Psi, \Delta, \Psi ' \Phi ' ) \vdash E
	\using
             [\odot_{e}]
         \thickness=0.07em
\endprooftree 
\end{flushright}
\bigskip

\item[$\bullet$] if hypothesis are in the left premise of $\odot_e$:

\bigskip
\prooftree 
	\Gamma \vdash A \otimes B
	\prooftree 
		(\Delta, (A,B), \Delta ' )\vdash C \odot D
		\quad (\Phi, \Psi, <C;D>, \Psi ', \Phi ' )\vdash E
		\justifies
		(\Phi, \Psi, \Delta,  (A,B) , \Delta ',  \Psi ', \Phi ') \vdash E
		\using
	             [\odot_{e}]
	         \thickness=0.07em
	\endprooftree 
	\justifies
	(\Phi, \Psi, \Delta,  \Gamma, \Delta ', \Psi ',\Phi ' ) \vdash E
	\using
             [\otimes_{e}]
         \thickness=0.07em
\endprooftree 
\bigskip 

\begin{flushright}
$\Rightarrow$
\prooftree 
	\prooftree 
		\Gamma \vdash A \otimes B
		\quad (\Delta, (A,B), \Delta ' ) \vdash C \odot D
		\justifies
		(\Delta, \Gamma, \Delta ' ) \vdash C \odot D
		\using
	             [\otimes_{e}]
	         \thickness=0.07em
	\endprooftree 
	(\Phi, \Psi, <C;D>, \Psi ', \Phi ') \vdash E
	\justifies
	(\Phi, \Psi, \Delta,  \Gamma, \Delta' , \Psi ', \Phi ' ) \vdash E
	\using
             [\odot_{e}]
         \thickness=0.07em
\endprooftree 
\end{flushright}
\end{itemize}
\bigskip

\item[$\circ$] Rise over $\odot_i$:
\begin{itemize}
\item[$\bullet$] if hypothesis are in the left premise of $\odot_i$:

\bigskip
\prooftree 
	\Gamma \vdash A \otimes B
	\quad \prooftree 
		(\Delta, (A,B), \Delta ' ) \vdash C
		\quad \Phi \vdash D
		\justifies
		<(\Delta, (A,B), \Delta '); \Phi> \vdash C \odot D
		\using
	             [\odot_{i}]
	         \thickness=0.07em
	\endprooftree 
	\justifies
	<(\Delta, \Gamma, \Delta '); \Phi >\vdash C \odot D
	\using
             [\otimes_{e}]
         \thickness=0.07em
\endprooftree 

\begin{flushright}
$\Rightarrow$
\prooftree 
	\prooftree 
		\Gamma \vdash A \otimes B
		\quad (\Delta, (A,B) , \Delta ' ) \vdash C
		\justifies
		(\Delta, \Gamma, \Delta ' ) \vdash C
		\using
	             [\otimes_{e}]
	         \thickness=0.07em
	\endprooftree 
	\quad \Phi \vdash D
	\justifies
	<(\Delta, \Gamma , \Delta ') ; \Phi > \vdash C \odot D
	\using
            [\odot_{i}]
         \thickness=0.07em
\endprooftree 
\end{flushright}
\bigskip

\item[$\bullet$] if hypothesis are in the right premise of $\odot_i$:

\bigskip
\prooftree 
	\Gamma \vdash A \otimes B
	\prooftree 
		\Delta \vdash C
		\quad (\Phi, (A,B), \Phi ') \vdash D
		\justifies
		<\Delta ; (\Phi, (A,B), \Phi ')> \vdash C \odot D
		\using
	             [\odot_{i}]
	         \thickness=0.07em
	\endprooftree 
	\justifies
	<\Delta ; (\Phi, \Gamma, \Phi ' ) > \vdash C \odot D
	\using
             [\otimes_{e}]
         \thickness=0.07em
\endprooftree 

\begin{flushright}
$\Rightarrow$
\prooftree 
	\Delta \vdash C
	\quad \prooftree 
		\Gamma \vdash A \otimes B
		\quad  (\Phi, (A,B), \Phi ') \vdash D
		\justifies
		(\Phi, \Gamma, \Phi ' ) \vdash D
		\using
	             [\otimes_{e}]
	         \thickness=0.07em
	\endprooftree 
	\justifies
	<\Delta ;  (\Phi, \Gamma, \Phi ' )> \vdash C \odot D
	\using
            [\odot_{i}]
         \thickness=0.07em
\endprooftree 
\end{flushright}
\bigskip
\end{itemize}
\item[$\circ$] Rise over $\sqsubset$:

\bigskip
\begin{flushright}
\prooftree 
	\prooftree 
		\Gamma \vdash A \otimes B
		\justifies
		\Gamma ' \vdash A \otimes B
		\using
	             [\sqsubset]
	         \thickness=0.07em
	\endprooftree 
	\quad 
	\Delta[A,B] \vdash D
	\justifies
	\Delta [\Gamma ']  \vdash D
	\using 
             [\otimes_{e}]
         \thickness=0.07em
\endprooftree 
$\quad\Rightarrow\quad$
\prooftree 
	\prooftree 
		\Gamma \vdash A \otimes B
		\quad  \Delta[A,B] \vdash D
		\justifies
		\Delta[\Gamma] \vdash D
		\using
	             [\otimes_{e}]
	         \thickness=0.07em
	\endprooftree 
	\justifies
	\Delta[\Gamma '] \vdash D
	\using
            [\sqsubset]
         \thickness=0.07em
\endprooftree 
\end{flushright}

(the removal of hypothesis in $\Gamma$ does not modify the order)

\end{itemize}
The check of the property for non-commutativity is an extension of the property \ref{proofriselambek} which is similar to the previous case.
\end{proof}

The procedure is analogous to the case studied for L$_\odot$. 
To do this we introduce \textbf{redexes} of the calculus.
Mix logic contains seven redexes: one for each implicative connective and two for each product connective, the conjoined introduction could be in the left premise or in the right premise.

The following present the seven redexes:
\begin{enumerate}
\item[$\circ$] Redex$_{\lfrom }$: introduction $\lfrom _i$ and direct elimination of $\lfrom _e$.
\begin{center}
\prooftree 
	\prooftree 
		\prooftree \leadsto  \langle \Gamma; D\rangle \vdash C \endprooftree 
		\justifies
		\Gamma \vdash C \lfrom  D 
		\using
	         	    [\lfrom _{i}]
         		\thickness=0.07em
	\endprooftree 
	\quad \prooftree \leadsto  \Delta \vdash D \using \delta_1 \proofdotnumber=5 \endprooftree 
	\justifies
	\langle \Gamma; \Delta \rangle \vdash C
	\using
             [\lfrom _{e}]
         \thickness=0.07em
\endprooftree 
$\quad \Rightarrow\quad$
\prooftree 
	\prooftree   
		\leadsto \Delta \vdash D 
		\using \delta_1\proofdotnumber=5
	\endprooftree  
	\leadsto  \langle \Gamma ; \Delta \rangle \vdash C
\endprooftree
\end{center}

\item[$\circ$] Redex$_{\lto}$: introduction  $\lto_i$ and direct elimination of $\lto_e$.
\begin{center}
\prooftree 
	\prooftree \leadsto \Delta \vdash D \using \delta_1 \proofdotnumber=5 \endprooftree 
	\quad\prooftree 
		\prooftree \leadsto \langle D ; \Gamma \rangle\vdash  C\endprooftree 
		\justifies
		\Gamma \vdash D \lto C 
		\using
	         	    [\lto_{i}]
         		\thickness=0.07em
	\endprooftree 
	\justifies
	\langle \Delta ; \Gamma \rangle \vdash C
	\using
             [\lto_{e}]
         \thickness=0.07em
\endprooftree 
$\quad \Rightarrow\quad$
\prooftree \prooftree   \leadsto \Delta \vdash D \using \delta_1 \proofdotnumber=5\endprooftree \leadsto  \langle \Delta ; \Gamma \rangle \vdash C \endprooftree
\end{center}

\item[$\circ$] Redex$_{\llto}$: introduction $\llto_i$ and direct elimination of $\llto_e$.
\begin{center}
\prooftree 
	\prooftree \leadsto  \Delta \vdash D \using \delta_1 \proofdotnumber = 5\endprooftree 
	\quad \prooftree 
		\prooftree \leadsto (  D , \Gamma ) \vdash C\endprooftree 
		\justifies
		\Gamma \vdash D \llto  C
		\using
	         	    [\llto_{i}]
         		\thickness=0.07em
	\endprooftree 
	\justifies
	(\Delta , \Gamma )\vdash C
	\using
             [\llto_{e}]
         \thickness=0.07em
\endprooftree 
$\quad \Rightarrow\quad$
\prooftree \prooftree   \leadsto \Delta \vdash D \using \delta_1 \proofdotnumber=5\endprooftree  \leadsto  (\Delta , \Gamma) \vdash C\endprooftree 
\end{center}

\item[$\circ$] Redex$_{\odot}$: introduction $\odot_i$ and direct elimination of $\odot_e$ on the left.
\begin{center}
\prooftree 
	\prooftree 
		\prooftree  \leadsto  \Delta_1 \vdash A \using \delta_1 \proofdotnumber=5 \endprooftree 
		\prooftree  \leadsto  \Delta_2 \vdash B \using \delta_2 \proofdotnumber=5 \endprooftree 
		\justifies
		\langle \Delta_1; \Delta_2\rangle \vdash A\odot B
		\using
	         	    [\odot_{i}]
         		\thickness=0.07em
	\endprooftree 
	\quad \prooftree \leadsto  \Gamma [\langle A;B \rangle] \vdash D\endprooftree 
	\justifies
	\Gamma [\langle \Delta_1; \Delta_2\rangle]\vdash D
	\using
             [\odot_{e}]
         \thickness=0.07em
\endprooftree 
$\quad \Rightarrow\quad$
 \prooftree 
 	\Gamma [ \langle \prooftree \leadsto  A \using \delta_1 \proofdotnumber=5\endprooftree ; \prooftree  \leadsto  B \using \delta_2 \proofdotnumber=5\endprooftree \rangle]  \vdash  D
	\thickness=0 em
\endprooftree 
\end{center}

\item[$\circ$] Redex$_{\odot}$: introduction $\odot_i$ and direct elimination of $\odot_e$ on the right.
\begin{center}
\prooftree 
	 \prooftree \leadsto \Gamma \vdash A \odot B \using \delta_1 \proofdotnumber=5\endprooftree 
	\quad \prooftree 
		A \vdash A
		\quad B \vdash B
		\justifies
		\langle A;B \rangle \vdash A\odot B
		\using
	         	    [\odot_{i}]
         		\thickness=0.07em
	\endprooftree 
	\justifies
	\Gamma \vdash A\odot B
	\using
             [\odot_{e}]
         \thickness=0.07em
\endprooftree 
$\quad \Rightarrow\quad$
 \prooftree \leadsto  \Gamma \vdash A\odot B \using \delta_1 \proofdotnumber=5 \endprooftree
\end{center}

\item[$\circ$] Redex$_{\otimes}$: introduction $\otimes_i$ and direct elimination of $\otimes_e$ on the left.
\begin{center}
\prooftree 
	\prooftree 
		\prooftree \leadsto  A \using \delta_1 \proofdotnumber=5 \endprooftree 
		\prooftree \leadsto  B \using \delta_2 \proofdotnumber=5\endprooftree 
		\justifies
		A\otimes B
		\using
	         	    [\otimes_{i}]
         		\thickness=0.07em
	\endprooftree 
	\quad \prooftree A \quad B \leadsto  D\endprooftree 
	\justifies
	D
	\using
             [\otimes_{e}]
         \thickness=0.07em
\endprooftree 
$\quad \Rightarrow\quad$
 \prooftree 
 	\Gamma [ (\prooftree \leadsto  A \using \delta_1 \proofdotnumber=5\endprooftree , \prooftree  \leadsto  B \using \delta_2 \proofdotnumber=5\endprooftree)]  \vdash  D
	\thickness=0 em
\endprooftree 
\end{center}
\item[$\circ$] Redex$_{\otimes}$: introduction $\otimes_i$ and direct elimination of $\otimes_e$ on the right.
\begin{center}
\prooftree  
	\prooftree  \leadsto  \Gamma \vdash A\otimes B  \using \delta_1 \proofdotnumber=5 \endprooftree
	\quad \prooftree 
		A \vdash A
		\quad B \vdash B
		\justifies
		(A,B) \vdash A \otimes B
		\using
	         	    [\otimes_{i}]
         		\thickness=0.07em
	\endprooftree 
	\justifies
	\Gamma \vdash A \otimes B
	\using
             [\otimes_{e}]
         \thickness=0.07em
\endprooftree 
$\quad \Rightarrow\quad$
\prooftree \leadsto  \Gamma \vdash A\otimes B \using \delta_1  \proofdotnumber=5\endprooftree
\end{center}
\end{enumerate}

Again, we use the notion of $k$-extended-redex, that we give back to this calculus.
Every path of a principal branch $B(S_0)$ of lenght $k$ from $S_0$ to $S_n$ with $|S_0|^r=|S_n|^r$, such that $|S_0|$ is an elimination rule $R_e$ and $S_n$ is the conclusion of an introduction rule $R_i$ is called a $\mathbf{k}$\textbf{-extended--redex}. 
Note that \emph{0-extended-redex} are the usual redexes of PCMLL, presented above.

A proof is in \textbf{normal form} if it does not contain any $k$-extended-redexes,  $\forall k \in \N$.

\subsection{Normalisation of PCMLL}

A proof is in  \textbf{normal form} is it does not contain any $k$-extended-redex.
In the same way for L$_{\odot}$, we define the different parts of a measure used for the normalisation. 

\begin{enumerate}
\item For a rule $R$, \emph{implicative elimination} ($\lto_e$, $\lfrom _e$ or $\llto_e$), with $S_0$ as conclusion, 
the integer $e(R)$ is $k$ if there is a $k$-extended-redex in $B(S_0)$ called a $k$-extended-redex over $R$, and otherwise $0$ .

In this measure, we count every rules, including entropy rules and product eliminations.
(this is exactly the number $k$ for $k$-extended-redexes wich contains implicative eliminations).

\item 
For a rule $R$, \emph{product elimination}, with $S_0$ as conclusoin, the integer $g(R)$ is $k$ if there is a $k$-extended-redex in $B(S_0)$ called a $k$-extended-redex over $R$ and otherwise $0$.
(this is exactly the number $k$ for $k$-extended-redexes which contains product elimination).
\end{enumerate}

Let $\delta$ a proof of PCMLL, 
we define $IER(\delta)$ (\resp $PER(\delta)$) as the number of occurrences of implicative elimination rules (\resp product) in $\delta$.

We define $e(\delta)$ as $\min_{R \in IER(\delta)}(e(R))$ and $g(\delta)$ as $\min_{R \in PER(\delta)}(g(R))$ egal to $0$ if and only if $\delta$ does not contain implicative $k$-extended-redex any more (\resp product).
Moreover, we define $r(\delta)$ as the number of rule in $\delta$.
We introduce the measure of the proof $\delta$, wrote $|\delta|$, as the triplet of integers, with the lexicographic order:
$$<r(\delta), e(\delta), g(\delta)>$$

\begin{propri}\label{prop00}
A proof with the measure $\langle n, 0, 0 \rangle$ is in normal form.
\end{propri}

\begin{proof}
Let $\delta$ a proof of PCMLL, 
in the measure of $\delta$:
\begin{itemize}
\item[$\circ$] the first integer is the number of rules which is the number of rules which is minimal if $\delta$ does not contain any $k$-extended-redex.

\item[$\circ$] the second integer is the distance between each part of an implicative $k$-extended-redex ($\lto$, $\lfrom $ or $\llto$). If its value is null, $\delta$ contains no more.

\item[$\circ$] the third integer id the distance between each part of an product $k$-extended-redex ($\odot$ ou $\otimes$). If its value is null, $\delta$ contains no more.
\end{itemize}

Note that the two other redexes could only be $0$-extended-redexes.
The figure \ref{exformnorm2} shows two examples of proofs which are not in normal form.
\end{proof}

\begin{figure}[htbp]
\begin{center}
$\begin{array}{l|l}
$example 1$& $example 2$\\
\prooftree
	\vdash C
	\prooftree
		\vdash E \odot F 
		\quad \prooftree
			(C, \langle E; F\rangle) \vdash A \lfrom  B
			\justifies
			\langle E; F\rangle \vdash C \lto (A \lfrom  B)
			\using
		             [\lto_{i}]
         			\thickness=0.07em
		\endprooftree
		\justifies
		\vdash C \lto (A \lfrom  B)
		\using
		   [\odot_{e}]
	\endprooftree
	\justifies
	\vdash  A \lfrom  B
	\using
	   [\lto_{e}]
\endprooftree
&
\prooftree
	\prooftree
		\vdash A\otimes B
		\quad \prooftree
			A \vdash E
			\quad B \vdash F
			\justifies
			(A,B) \vdash E \otimes F
			\using
		             [\otimes_{i}]
         			\thickness=0.07em
		\endprooftree
		\justifies
		\vdash E \otimes F
		\using
		   [\otimes_{e}]
	\endprooftree
	(E,F) \vdash D
	\justifies
	\vdash  D
	\using
	   [\otimes_{e}]
\endprooftree
\\
e(\lto_e) = 1 \Rightarrow& g(\odot_e) = 1\Rightarrow\\
\end{array}$
\end{center}
\caption{Examples of proofs of PCMLL which are not in normal form \label{exformnorm2}}
\end{figure}

\begin{propri}\label{propkredetmin}
A $k$-extended-redex $S_0 \cdots S_k$, with an implicative elimination rule contains a $k'$-extended-redex, with $k>k'$.
\end{propri}

\begin{proof}
Let $\delta$ a proof in normal form.
It goes through a principal branch from the conclusion and we exhibit the possibilities encountered in this derivation:

One of minimal $k$-extended-redexes has the following structure:

\begin{center}
\prooftree
	\prooftree
		\prooftree	
			\prooftree	
				\prooftree
					\prooftree	
						\prooftree	
							\justifies
							X
							\using
							  [introduction]
							\thickness=0.07em
						\endprooftree	 
						\leadsto
						U
						\using \delta_3
						\proofdotnumber=5
					\endprooftree			
					\justifies
					U \lfrom  A
					\using
					  [\lfrom _i]
					\thickness=0.07em
				\endprooftree
				\leadsto
				U \lfrom  A
				\using \delta_2
				\proofdotnumber=5
			\endprooftree
			\quad A
			\justifies
			U
			\using
			  [\lfrom _e]	
         			\thickness=0.07em
		\endprooftree
		\leadsto
		X
		\using \delta_1
		\proofdotnumber=5
		\thickness=0.07em
	\endprooftree
	\justifies
	\using
	  [elimination]
	\thickness=0.07em
\endprooftree
\end{center}
Then, we define: 
\begin{itemize}
\item $\delta_1$ as a sequence of implicative elimination rules and entropy;
\item $\delta_2$ as a sequence of product elimination rules and entropy.
\end{itemize}

$U$ is the formula produced by the higher implicative elimination rule. For this derivation, the number of symbols in $U$ is over the number os symbols in $X$
Then, in the principal branch, $\delta_2$ is a sequence of product eliminations and entropy.

The only rule above which could give the formula $U \lfrom  A$ is an introduction rule, because it is the only type which decrease the number of symbols in the formula.

On examples, the only introduction rule that we could structurally use is $\lfrom _i$ on the formula $A$.
Because this introduction is in the principal branch, it must be the conjoined rule of the previous introduction.
Then, we have a new $k'$-ectended-redex inside the $k$-extended-redex.
$k'$ is the number of rules in $\delta_2$ and because $\delta_2$ is a sub-part of the full proof, $k>k'$.
\end{proof}

We could propose a consequence of this property.

\begin{lem}
\label{propricycle}
In the proof $\delta$, a $k$-extended-redex which minimize $e(\delta)$ nonzero only contains product elimination rules and entropy.
\end{lem}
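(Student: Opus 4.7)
Plan: The lemma will follow as a direct consequence of Property \ref{propkredetmin}. Fix a $k$-extended-redex $\rho$ in $\delta$ realizing the minimum $e(\delta) = k > 0$, so that $\rho$ runs along the principal branch from the conclusion $S_0$ of an implicative elimination $R_0$ up to the conjoined terminal introduction. The goal is to show that every rule strictly between $R_0$ and this terminal introduction is either a product elimination or an entropy rule.

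First I would note that, by the clauses defining the principal branch in Section 2.2, only unary rules (entropy and the three implicative introductions $\lfrom_i$, $\lto_i$, $\llto_i$), product eliminations $\odot_e$ and $\otimes_e$, and implicative eliminations $\lfrom_e$, $\lto_e$, $\llto_e$ can occur along $B(S_0)$; no clause extends the branch past a binary introduction, so $\odot_i$ and $\otimes_i$ cannot appear as intermediate rules in $\rho$. Hence any forbidden intermediate rule must be either an implicative elimination or an implicative introduction, and I would handle these two cases separately, deriving a contradiction with the minimality of $e(\delta)$ in each.

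If some intermediate $R'$ of $\rho$ is an implicative elimination, then the sub-path of $\rho$ starting at $R'$ is itself an extended-redex over $R'$ of length strictly less than $k$; Property \ref{propkredetmin} applied to $\rho$ (or to this sub-path) then exhibits a yet smaller $k'$-extended-redex over an implicative elimination strictly inside $\rho$, with $k' < k$, contradicting $e(\delta) = k$. If instead some intermediate $R'$ is an implicative introduction, say introducing a formula with outermost connective of the form $A \lto C$ (the $\lfrom$ and $\llto$ cases being symmetric), then I would trace the evolution of the right-hand side along the principal branch: entropy and product eliminations preserve it, implicative introductions strip off an outermost implication, and implicative eliminations add one. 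Since the right-hand side at $S_0$ is the strictly smaller formula produced by $R_0$, the connective introduced by $R'$ cannot persist all the way down to $S_0$ and must be cancelled by an implicative elimination $R''$ lying strictly between $R'$ and $R_0$. This returns us to the previous case and again yields a contradiction.

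The main obstacle will be the implicative-introduction case, where the bookkeeping on right-hand sides must be done carefully enough to guarantee that the connective introduced by an intermediate $R'$ really is consumed by some implicative elimination located below $R'$ and above $R_0$ inside $\rho$. Once that point is settled, the lemma reduces cleanly to a double application of Property \ref{propkredetmin}, giving the required description of the minimizing extended-redex as containing only product eliminations and entropy rules.
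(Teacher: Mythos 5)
Your argument is correct and essentially the paper's own: Property \ref{propkredetmin} together with minimality of $e(\delta)$ excludes intermediate implicative eliminations, a bookkeeping argument on the right-hand formula (whose values at the two ends of the extended-redex coincide) excludes intermediate implicative introductions, and the definition of principal branch already excludes binary introductions, leaving only product eliminations and entropy. The only difference is presentational: the paper says directly that, with no eliminations, the right-hand formula cannot decrease and hence stays constant, so only premise-preserving rules ($\odot_e$, $\otimes_e$, $\sqsubset$) can occur, whereas you reduce the introduction case back to the elimination case via the same size-counting idea.
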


\begin{proof}
If the $k$-extended-redex is minimal, with the property \ref{propkredetmin}, there is no elimination rule. Moreover, if we do not use elimination rules, the number of symbols in the formula could not decrease. 
It must be constant in the $k$-extended-redex.

In this case, the only rules that we could use are rules with the conclusion is one of the premises.
The sequence of rules could only contains product elimination and entropy rules: $\odot_e, \otimes_e$ or $\sqsubset$.
\end{proof}

\begin{propri}\label{propriproductbelow}
Product eliminations and entropy rules could go under implicative elimination rules.

Let $R$ a product elimination $\otimes_e$ (\resp a rule $\odot_e$) of $\Gamma[\Delta]\seq C$ between a proof $\delta_0$ with $\Delta\seq A\otimes B$ as conclusion and a proof $\delta_1$ with $\Gamma[(A,B)]\seq C$ as conclusion (\resp $\Gamma[\langle A;B\rangle]\seq C$). This proof is combined with a proof $\delta_2$  with $\Theta \seq U$ as conclusion, by an implicative elimination rule $R'$. Then, the conclusion is $\langle\Theta ; \Gamma[\Delta]\rangle \seq V$ if $R'$ is $\lto_e$ (\resp $\langle\Gamma[\Delta]; \Theta\rangle \seq V$ if $R'$ is $\lfrom _e$ and $(\Gamma[\Delta] , \Theta ) \seq V$ if $R'$ is $\llto_e$).
Figure \ref{descprod} presents the case where $R'$ is $\lto_e$.

Then, we can obtain a proof for the same sequent which depends on $R'$ by applying first the rule $R'$ between the proof $\delta_2$ with $\Theta \seq U$ as conclusion and the proof $\delta_1$ with $\Gamma[(A,B)]\seq X$ as conclusion (\resp $\Gamma[\langle A;B\rangle]\seq X$) giving the sequent $\langle \Theta ; \Gamma[(A,B)]\rangle\seq V$ (\resp $\langle \Gamma[\langle A;B\rangle] ; \Theta\rangle\seq V$ and $( \Theta , \Gamma[(A,B)])\seq V$). Applying the rule $R$ on this new proof, we get the same sequent $\langle \Theta ; \Gamma[\Delta]\rangle \seq V$ (\resp $\langle \Gamma[\Delta] ; \Theta\rangle\seq V$ and $( \Theta , \Gamma[\Delta])\seq V$).
\end{propri}

\begin{figure}[htbp]
\begin{center}

\prooftree
	\prooftree
		\leadsto
		\Theta \seq U
		\using
		  \delta_2
	\endprooftree
	\quad 
	\prooftree
		\prooftree
			\leadsto
			\Delta \seq A\otimes B
			\using
			  \delta_0
		\endprooftree
		\prooftree
			\leadsto
			\Gamma[(A;B)] \seq C
			\using
			  \delta_1
		\endprooftree
		\justifies
		\Gamma[\Delta] \seq C
		\using R
	\endprooftree
	\justifies
	\langle \Theta ; \Gamma[\Delta]\rangle \seq V
	\using [\lto_e]
\endprooftree
$\quad \Rightarrow\quad $
\prooftree
	\prooftree
		\leadsto
		\Delta \seq A\otimes B
		\using
		  \delta_0
	\endprooftree
	\prooftree
		\prooftree
			\leadsto
			\Theta \seq U
			\using
			  \delta_2
		\endprooftree
		\quad
		\prooftree
			\leadsto
			\Gamma[(A;B)] \seq C
			\using
			  \delta_1
		\endprooftree
		\justifies
		\langle \Theta ; \Gamma[(A;B)]\rangle \seq V
		\using [\lto_e]
	\endprooftree
	\justifies
	\langle \Theta ; \Gamma[\Delta]\rangle \seq V
	\using R
\endprooftree

\end{center}
\caption{The go down of the product on the rule $\lto_e$ in PCMLL. \label{descprod}}
\end{figure}

\begin{proof}
Implicative eliminations do not modify the order between formulas of a same premise and do not use them.
Product elimination and entropy rules do not modify formulas but only hypothesis.
Then, these rules could be used in any order.
\end{proof}

\begin{theo}
Every proof $\delta$ of PCMLL has a normal form.
\end{theo}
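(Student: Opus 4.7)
The plan is to mimic the strategy used for L$_{\odot}$, proceeding by strong induction on the lexicographic measure $|\delta|=\langle r(\delta), e(\delta), g(\delta)\rangle$. The base case is Property \ref{prop00}: when the measure is $\langle n, 0, 0\rangle$ the proof is already in normal form, so there is nothing to do.

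For the inductive step, given a non-normal $\delta$, I would distinguish three cases according to which component of the measure is responsible for non-normality, exactly as in the L$_{\odot}$ proof. First, if $\delta$ contains some $0$-extended-redex, I apply the corresponding one of the seven reduction schemes (Redex$_{\lfrom}$, Redex$_{\lto}$, Redex$_{\llto}$, the two Redex$_{\odot}$ and the two Redex$_{\otimes}$). This strictly decreases $r(\delta)$, and the induction hypothesis gives a normal form for the contractum. Second, if no $0$-extended-redex exists but $e(\delta)>0$, pick an implicative elimination $R$ realising the minimum, with $e(R)=k>0$. By Lemma \ref{propricycle} the associated $k$-extended-redex consists solely of product eliminations and entropy rules; I then use Property \ref{propriproductbelow} (together with an analogous commutation for the entropy rule, which only acts on the partial order and hence trivially permutes with any implicative elimination) to push these rules one at a time \emph{below} $R$. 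Each such permutation preserves $r(\delta)$ but shortens the offending $k$-extended-redex by one, so $e(\delta)$ strictly decreases. Third, if $e(\delta)=0$ but $g(\delta)>0$, pick a product elimination realising the minimum and use Property \ref{product-upwards} to push it \emph{upwards}; again $r(\delta)$ is preserved and $g(\delta)$ strictly decreases.

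The main obstacles I expect are twofold. First, the entropy rule is not covered literally by Property \ref{propriproductbelow}, so one must verify explicitly that an entropy step can cross an implicative elimination without duplication or loss: this amounts to checking that weakening the series--parallel order on the hypotheses commutes with the three implicative elimination rules, which is routine since the implicative eliminations simply concatenate contexts in $\langle\_;\_\rangle$ or $(\_,\_)$ and do not inspect the internal order. Second, and more delicate, one has to ensure that each descending or ascending permutation does not inadvertently create a new $k'$-extended-redex strictly shorter than the one we are eliminating, which would keep $e(\delta)$ or $g(\delta)$ unchanged and break the induction. Here one relies on the fact that the rule being moved lies on the principal branch of $R$ and that the formulae decorating the sequents along that branch are preserved by the permutation, so no implicative introduction suddenly becomes conjoined to an implicative elimination at shorter distance than before.

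Finally, once the reduction halts, the measure is $\langle n,0,0\rangle$ for some $n$, and Property \ref{prop00} yields normality. The lexicographic termination is clear since in each of the three cases at least one of the components strictly decreases while those of higher priority do not increase. Uniqueness is not asserted by the theorem, which is consistent with the discussion in the introduction that normal forms are only ``nearly unique'' up to permutation of sequences of commutative product eliminations.
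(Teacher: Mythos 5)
Your proposal matches the paper's own proof: the same lexicographic measure $\langle r(\delta), e(\delta), g(\delta)\rangle$, the same three-case induction (contract a $0$-extended-redex, otherwise use Lemma \ref{propricycle} with Property \ref{propriproductbelow} to shrink a minimal implicative extended-redex, otherwise use Property \ref{product-upwards} to shrink a product extended-redex), concluding with Property \ref{prop00}. The two caveats you raise (entropy commuting with implicative eliminations, and no new shorter extended-redexes being created) are exactly the points the paper also addresses, the latter by an explicit check of the principal branches in the product case, so the argument is essentially identical.
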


\begin{proof}Let $\delta$ a proof such that $|\delta|= \langle n, e, g \rangle$.

We proceed by induction on the measure of proof.

By induction, every proof $\delta'$ of measure $|\delta '| < \langle n, e, g \rangle$ has a normal form.

\noindent If  $\delta$ contains a redex: reducing this redex reduces the number of rules in $\delta$, then the resulting proof $\delta'$ is such that $n(\delta') < n(\delta)$, hence $|\delta'| < |\delta|$. By induction $\delta'$ has a normal form, then $\delta$ also.\\

\noindent Else:
\begin{verse}
If $e(\delta) \neq 0$, then there is an implicative elimination rule $S$ such that $S$ is in a $e(\delta)$-extended-redex. 
This $e(\delta)$-extended-redex must be minimal, and the property \ref{propricycle} implies that it contains only product elimination and entropy rules.
Moreover, the property \ref{propriproductbelow} allows to rise $S$ over the rule above it (which corresponds to lower a product elimination or entropy rules below an implicative elimination).
The proof obtained $\delta'$ is such that $n(\delta') = n(\delta)$ and $e(\delta') = e(\delta) - 1$. 
The measure reduces and the induction allows to conclude that $\delta'$ has a normal form and then $\delta$.

Else $e(\delta) = 0$:

\begin{verse}
If $g(\delta)\neq0$: 
then it exists a product elimination rule $R$ such that $R$ is in a $g(\delta)$-extended-redex.
In this case, $\delta$ does not contain any implicative extended-redex and is it is minimal, it only contains product elimination and entropy rules.
$R$ could rise over the left premise -- the rule conjoined $\otimes_i$ is necessary in this part of the proof.

But for those rules, product eliminations can still rise above as shown in the property \ref{product-upwards}.

Then, we obtain the proof $\delta'$ such that $n(\delta') = n(\delta)$. 
We check that none new $k$-extended-redex based on a rule of $IEP(\delta)$ appears:

$\delta$ has the form:

\begin{center}
\prooftree
	\prooftree
		\prooftree
			\leadsto
			X
			\using
			  \delta_2
			\proofdotnumber=7
		  \endprooftree
		\quad 
		\prooftree
			A \odot B
			\leadsto
			A \odot B
			\using
			  \delta_3
			\proofdotnumber=7
		\endprooftree	
		\justifies
		A \odot B
		\using
		  [R]
		\thickness=0.07em	
	\endprooftree
	\quad 
	\prooftree
		\leadsto
		D
		\using
		  \delta_1
		\proofdotnumber=7
	\endprooftree
	\justifies
	\prooftree
		D
		\leadsto
		\using
		  \delta_4
		\proofdotnumber=7
	\endprooftree
	\using
	  [\odot_e]
	\thickness=0.07em	
\endprooftree
\end{center}

\begin{itemize}
\item[$\circ$] every principal branch of $\delta_3$ to $\delta_4$ does not contain extended-redexes because $\delta_3$ is in the left part of $\odot_e$
\item[$\circ$] every principal branch of  $\delta_1$ to $\delta_4$ can contain extended-redexes.
\item[$\circ$] every principal branch of  $\delta_2$ to $\delta_4$ does not contain extended-redexes because $\delta_2$ is in the left part of $\odot_e$ (only for product elimination rules).
\end{itemize}

The reduction scheme of the redex then gives the new structure of the proof $\delta'$:

\begin{center}
\prooftree
	\prooftree
		\leadsto
		X
		\using
		  \delta_2
		\proofdotnumber=7
	\endprooftree
	\prooftree
		\prooftree
			A \odot B
			\leadsto
			A \odot B
			\using
			  \delta_3
			\proofdotnumber=7
		\endprooftree
		\quad
		\prooftree
		\leadsto
		D
		\using
		  \delta_1
		\proofdotnumber=7
	\endprooftree
	\justifies
		D
		\using
		  [\odot_e]
		\thickness=0.07em	
	\endprooftree
	\justifies
		\prooftree
			D
			\leadsto
			\using
			  \delta_4
			\proofdotnumber=7
		\endprooftree
		\using
		  [R]
		\thickness=0.07em	
\endprooftree
\end{center}

In this new proof:
\begin{itemize}
\item[$\circ$] every principal branch of $\delta_3$ to $\delta_4$ does not contain extended-redexes because $\delta_3$ is in the flet par of $\odot_e$
\item[$\circ$] every principal branch of $\delta_1$ to $\delta_4$ does not contain new extended-redexes, and the measure for these extended-redexes decrease of 1.
\item[$\circ$] every principal branch of $\delta_2$ to $\delta_4$ does not contain extended-redexes because $\delta_2$ is in the left part of $R$ (which is necessary a product elimination rule).
\end{itemize}
The proof does not contain new $k$extended--redex ; then the proof does not contain new implicative extended-redex.
Then, we have $e(\delta') = e(\delta)$ and $g(\delta') = g(\delta) -1$.
Thus $|\delta'| < |\delta|$ and by induction $\delta'$ has a normal form, and therefor $\delta$ also has one.

Else: we have $e(\delta) = g(\delta) = 0$, because of the property \ref{prop00}, $\delta$ is in normal form.
\end{verse}
\end{verse}
\end{proof}

Normal forms of the previous examples are presented in the figure \ref{exformnorm2}. 
they are derived from the underlying algorithm of the previous proof.

\begin{center}
$\begin{array}{l|l}
$example 1$& $example 2$\\
\prooftree
	\vdash E \odot F 
	\prooftree
		\vdash C
		\quad \prooftree
			(C, \langle E; F\rangle) \vdash A \lfrom  B
			\justifies
			\langle E; F\rangle \vdash C \lto (A \lfrom  B)
			\using
		             [\lto_{i}]
         			\thickness=0.07em
		\endprooftree
		\justifies
		\vdash  A \lfrom  B
		\using
		   [\lto_{e}]
	\endprooftree
	\justifies
	\vdash C \lto (A \lfrom  B)
	\using
	   [\odot_{e}]
\endprooftree
&
\prooftree	
	\vdash A\otimes B
	\prooftree
		 \prooftree
			A \vdash E
			\quad B \vdash F
			\justifies
			(A,B) \vdash E \otimes F
			\using
		             [\otimes_{i}]
	         		\thickness=0.07em
		\endprooftree	
		(E,F) \vdash D
		\justifies
		(A,B) \vdash  D
		\using
		   [\otimes_{e}]
	\endprooftree
	\justifies
	\vdash D
	\using
	   [\otimes_{e}]
\endprooftree
\\

e(\lto_e) = 0& g(\odot_e) = 0\\
\end{array}$
\end{center}

Now, for proofs in normal form, we check the sub-formula property.

\subsection{Sub-formula property for PCMLL}

\begin{theo}
The sub-formula property holds for PCMLL: in a normal proof $\delta$ of a sequent $\Gamma \vdash C$, every formulae of a sequent is a sub-formula of some hypothesis ($\Gamma$) or of the conclusion ($C$). 
\end{theo}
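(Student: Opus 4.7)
The plan is to mimic the proof of Theorem~\ref{prooftruc} for L$_\odot$, proceeding by induction on the structure of a normal proof $\delta$ of $\Gamma \vdash C$, with the same strengthening that was needed in the Lambek case: if the last rule of a normal sub-proof is an elimination ($\lto_e$, $\lfrom_e$, $\llto_e$, $\odot_e$ or $\otimes_e$), then the major premise of that rule must itself be a sub-formula of some hypothesis, not merely of the conclusion. Without this reinforcement, one cannot push the argument through the elimination cases, because the eliminated implication or product might otherwise escape into the conclusion.

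First I would dispatch the easy cases. The axiom rule is immediate. The three introduction rules ($\lto_i$, $\lfrom_i$, $\llto_i$) and the two product introductions ($\odot_i$, $\otimes_i$) are handled exactly as in the L$_\odot$ proof: the conclusion displays each premise as a direct sub-formula, so the induction hypothesis on the sub-proofs transfers immediately. The entropy rule is new compared to L$_\odot$, but it preserves formulas (it only weakens the order), so its conclusion contains exactly the same multiset of formulas as its premise, and the induction hypothesis transfers without change.

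The real work is in the elimination cases, and the strategy is the one used for $\lto_e$ in the proof of Theorem~\ref{prooftruc}. For an elimination rule $R$ with principal premise of the form $X$ (e.g.\ $C \lto D$ for $\lto_e$, $A \odot B$ for $\odot_e$, $A \otimes B$ for $\otimes_e$, etc.), I would walk up the principal branch on that premise and analyse the last rule $R'$ producing $X$. If $R'$ is an elimination rule of implication type, the strengthened induction hypothesis directly gives that $X$ is a sub-formula of a hypothesis. If $R'$ is the conjoined introduction rule (e.g.\ $\lto_i$ producing $C\lto D$), this forms a $0$-extended-redex, contradicting normality. If $R'$ is any other introduction rule, it is structurally impossible, since no other introduction can produce the outermost connective of $X$. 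The remaining cases are $R' \in \{\odot_e, \otimes_e, \sqsubset\}$: here $X$ is transmitted unchanged from the premise, so I iterate the analysis upward along the principal branch. Since the proof is finite, this ascent must terminate either in an elimination of implication type (giving sub-formula of a hypothesis by induction), in the axiom (giving $X$ as a hypothesis directly), or at an introduction producing $X$; normalisation (via the notion of $k$-extended-redex of Property~\ref{propkredetmin} and Lemma~\ref{propricycle}) rules out the latter, because a chain of $\odot_e$, $\otimes_e$ and entropy rules culminating in an introduction of $X$ below an elimination of $X$ is precisely a $k$-extended-redex.

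The main obstacle is the same one that appeared in L$_\odot$ but is now amplified by the entropy rule and by the presence of two products: the ascent along the principal branch can mix $\odot_e$, $\otimes_e$ and $\sqsubset$ arbitrarily. The essential point is that all three transport the major premise unchanged, so the induction on the length of the ascent together with the exclusion of extended-redexes in a normal proof closes the argument. Once all elimination cases are verified in this uniform way, the induction is complete and every formula of $\delta$ is a sub-formula of $\Gamma$ or $C$.
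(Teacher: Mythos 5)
Your proposal follows essentially the same route as the paper's own proof: induction on the normal proof with a strengthened hypothesis for eliminations, direct treatment of axiom, introductions and entropy, and for each elimination a walk up the principal branch of the major premise in which conjoined introductions are excluded by normality (no $k$-extended-redexes), other introductions are structurally impossible, implicative eliminations are handled by the strengthened induction hypothesis, and chains of $\odot_e$, $\otimes_e$ and $\sqsubset$ are iterated until a hypothesis is reached. Your phrasing of the strengthening (major premise of a final elimination is a sub-formula of a hypothesis, stated also for product eliminations) differs slightly from the paper's (all formulae are sub-formulae of hypotheses when the last rule is an implicative elimination), but it plays the same role and the argument matches the paper's at the same level of detail.
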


\begin{proof} We proceed by induction on the number of rules of the proof.
Once again, we use a stronger definition of the property:

every formulae in a normal proof are sub-formulae of some hypotheses or of the conclusion of the proof and if the last rule used is an implicative elimination $\lto_e$, $\lfrom _e$ or $\llto_e$ every sub-formulae are sub-formulae of some hypotheses only.

Remark that entropy conserve the premise as conclusion, then entropy checked the sub-formula property for every case. And moreover, the axiom rule checks the property because the sequent is an hypothesis.

We check the induction hypothesis after the use of each other rule:

\begin{enumerate}
\item for the rule $\lto_e$.
\begin{center}
\prooftree 
	\prooftree \Delta_1  \leadsto \Delta \vdash C \using  \delta_1 \endprooftree 
	\quad \prooftree  \Gamma_2 \leadsto \Gamma \vdash C \lto D \using \delta_2\endprooftree 
	\justifies
	< \Delta ; \Gamma > \vdash D
	\using
	  [\lto_e]
         \thickness=0.07em
\endprooftree 
\end{center}

By hypothesis induction, every formulae in $\delta_1$ are sub-formulae of hypotheses $\Delta_1$ or of the conclusion $C$. And every formulae  in $\delta_2$are sub-formulae of hypotheses $\Gamma_2$ or of the conclusion $C \lto D$.
However $C$ is  sub-formula of $C \lto D$ and $D$ too.
One needs that $C \lto D$ was sub-formula of hypothesis of $\delta_2$ hypotheses.

Let us look the rule $R$ which produce $C \lto D$:
\begin{itemize}
\item if R is $\lto_i$: impossible because it is a $0$-extended-redex and the proof is in normal form.

\item if R is $\lfrom _i$, $\llto_i$, $\otimes_i$ or $\odot_i$: these cases are structurally impossibles beause because they can not produce $C \lto D$.

\item if R is $\lto_e$, $\lfrom _e$ or $\llto_e$: we use the induction hypothesis and $C\lto D$ is sub-formula of $\Gamma_2$.

\item if R is $\otimes_e$, $\odot_e$ or entropy: keep as conclusion the premise, thus we have to check the rule above.

If it is one of the previous rule, we use the same argument.
Else, the proof is a finite sequence of $\otimes_e$, $\odot_e$ and entropy.
They keep as conclusion a premise and then $C\lto D$ is one of the hypothesis of $\Gamma_2$.
\end{itemize}

\item it is strictly symmetrical for $\lfrom _e$ and $\llto_e$:
\bigskip

\begin{center}
\prooftree 
	\prooftree \Gamma_2 \leadsto \Gamma \vdash D \lfrom  C \using \delta_2\endprooftree 
	\quad \prooftree \Delta_1 \leadsto \Delta \vdash C \using \delta_1\endprooftree 
	\justifies
	<\Gamma ; \Delta > \vdash D
	\using
	  [\lfrom _e]
         \thickness=0.07em
\endprooftree 
$\quad$
\prooftree 
	\prooftree \Delta_1  \leadsto  \Delta \vdash C \using \delta_1\endprooftree 
	\quad \prooftree \Gamma_2 \leadsto  \Gamma \vdash C \llto D \using \delta_2\endprooftree 
	\justifies
	< \Delta ; \Gamma > \vdash D
	\using
	  [\llto_e]
         \thickness=0.07em
\endprooftree 
\end{center}
Every formulae are sub-formulae of $D\lfrom C$ (\resp $C\llto D$) or of hypotheses and $D\lfrom C$ (\resp $C\llto D$) is sub-formula of hypothesis $\Gamma_2$.

\item For all the introduction rules, the conclusion of each part of the proof are sub-formula of the conclusion. 
If the rule R is $\lto_i$,
Let the proof $\delta$:

\begin{center}
\prooftree 
	\prooftree \Gamma \leadsto  <\Gamma ; C > \vdash D\using \delta_1\endprooftree 
	\justifies
	\Gamma \vdash C \lto D
	\using
	  [\lto_i]
         \thickness=0.07em
\endprooftree 
\end{center}

Using the induction hypothesis, every formula in $\delta_1$ are sub-formulae of hypothesis $\Gamma$ or conclusion $D$. 
However $D$ is sub-formula of $C \lto D$ thus in $d$ every formulae of $\delta$ are sub-formulae of $\Gamma$ or of the conclusion  $C \lto D$.

\item for the rule $\lfrom _i$, let the proof $\delta$:
\begin{center}
\prooftree 
	\prooftree \Gamma \leadsto  <\Gamma ; C > \vdash D \using \delta_1\endprooftree 
	\justifies
	\Gamma \vdash D \lfrom  C
	\using
	  [\lfrom _i]
         \thickness=0.07em
\endprooftree 
\end{center}
Every formulae in $\delta_1$ are sub-formulae of hypotheses $\Gamma$ ou of the conclusion $D$.
$D$ is sub-formula of $D \lfrom  C$, then every formula of $\delta$ is sub-formula of hypotheses $\Gamma$ or of the conclusion $D \lfrom  C$. The property is checked in $\delta$.

\item for the rule $\llto_i$, let theproof $\delta$:
\begin{center}
\prooftree 
	\prooftree \Gamma \leadsto  <\Gamma ; C > \vdash D \using \delta_1\endprooftree 
	\justifies
	\Gamma \vdash C \llto D
	\using
	  [\llto_i]
         \thickness=0.07em
\endprooftree 
\end{center}
Every formulae in $\delta_1$ are sub-formulae of hypotheses $\Gamma$ ou of the conclusion $D$.
$D$ is sub-formula of $D \llto C$, then every formula of $\delta$ is sub-formula of hypotheses $\Gamma$ or of the conclusion $D \lfrom  C$. The property is checked in $\delta$.

\item for the rule $\otimes_i$, let the proof $\delta$:
\begin{center}
\prooftree 
	\prooftree \Delta_1 \leadsto  \Delta \vdash C\using \delta_1\endprooftree 
	\quad \prooftree \Gamma_2 \leadsto  \Gamma \vdash D\using \delta_2\endprooftree 
	\justifies
	( \Delta , \Gamma ) \vdash C \otimes D
	\using
	  [\otimes_i]
         \thickness=0.07em
\endprooftree 
\end{center}

\begin{itemize}
\item every formulae in $\delta_1$ are sub-formulae of hypotheses $\Delta_1$ or of the conclusion $C$.
\item every formulae in $\delta_2$ are sub-formulae of hypotheses $\Gamma_2$ or of the conclusion $D$.
\item however $C$ and $D$ are sub-formulae of $C \otimes D$, then in $\Gamma$, every formulae are sub-formulae of hypotheses $\Delta$ and $\Gamma$ or of the conclusion $C \otimes D$.
\end{itemize}

\item for the rule $\otimes_e$ :
\begin{center}
\prooftree 
	\prooftree \Delta_1 \leadsto  \Delta \vdash A \otimes B\using \delta_1\endprooftree 
	\quad \prooftree \Gamma_2 \leadsto  \Gamma \vdash D\using \delta_2\endprooftree 
	\justifies
	D
	\using
	  [\otimes_e]
	\thickness=0.07em	
\endprooftree 
\end{center}
\begin{itemize}
\item every formulae of $\delta_1$ are sub-formulae of hypotheses $\Delta_1$ or of the conclusion $A\otimes B$.
\item every formulae of  $\delta_2$ are sub-formulae of hypotheses $\Gamma_2$ or of the conclusion $D$.
\item moreover, $D$ is the conclusion of $\delta$. Thus, every formulae of $\delta_2$ are sub-formulae of hypotheses $\Gamma_2$ or of the conclusion of the proof $\delta$: $D$.
\end{itemize}

To check that the property holds for the other part of the proof, we must prove that $A\otimes B$ is sub-formula of hypotheses of $\delta_1$.
We look at the rule R above:

\begin{itemize}
\item[$\circ$] if $R$ is $\lto_e$, $\lfrom _e$ or  $\llto_e$, using the induction hypothesis $A \otimes B$ is sub-formula of hypotheses $\Delta_1$. 
\item[$\circ$] if $R$ is $\otimes_i$: this case is impossible because it might be a $0$-extended-redex and the proof is in normal form.
\item[$\circ$] if $R$ is $\lto_i$, $\lfrom _i$, $\llto_i$ ou  $\odot_i$: these case are structurally  impossible becaue these rules can not produce $A\otimes B$.
\item[$\circ$] if $\otimes_e$, $\odot_e$ or entropy: they keep one premise as conclusion, we analyze the rue above:
\begin{itemize}
\item[$\bullet$] either there is one of the previous rules, thus using the same arguments we conclude.
\item[$\bullet$] either, the proof  containing a finite number of rules, the sequence of rules is finite. Moreover,  it contains only $\otimes_e$, $\odot_e$ and entropy rules, thus the formula is a hypothesis in $\Delta_1$.
\end{itemize}
In every possible case, $A\otimes B$ or $A\odot B$ is sub-formula of hypotheses.
\end{itemize}
\end{enumerate}
In PCMLL, all proofs have a normal form and enjoy the sub-formula property.
\end{proof}

\section{Conclusion}

Motivated by concurrency and computational linguistics we have been defining PCMLL in natural deduction and proved normalisation. For Lambek calculus with product,  a subcalculus of PCMLL, we also characterized the unique normal proof. We only sketched the normal form(s) for the complete calculus and  writting this is our next job. 

Next we'll look forward a proof net syntax for PCMLL, which also allows to easily compute lambda terms and semantic reading. Despite  existence of proof nets for MLL and the Lambek calculus (of which PCMLL is the superimposition), and for intuitionistic NL of Abrusci and Ruet, because of the more flexible entropy rule that we are using, there is not yet  any proof net calculus for PCNLL. This work can be viewed as a first step in this direction. 

The uniqueness of the normal form for PCMLL can be done because what distinguishes two proofs in normal form of the same proof, is the relative position of $\otimes_e$ in the sequences in which they belong.
In reality, there exists a unique normal form but not canonical.
The algorithm of normalization that we propose retains the starting order between the different $\otimes_e$
This gives a unique form.
However, proofs with inversions of $\otimes_e$ in sequences of $\otimes_e$ provide equivalent proofs.

An extension of this work to a version as proof network of PCMLL (which also provide $\lambda$-terms and semantic representations) is a later stage will continue this work.

With respect to computational linguistic application, we look forward a simpler translation from 
PCMLL formulae to arrow types on $e$ and $t$ and thus from parse structures that are PCMLL deduction  to intuitionistic deduction, which are semantic readings. This open the question to interpret noun phrase and generalised quantifiers as the combination of $k$ (case) and $d$ (entities). The linear logic view on syntax also introduce the possibility to use game semantics, Girard's ludics, rather than model theoretic semantics for interpreting parse structures, and this ongoing work is part  a French national research program called Prelude.

\bibliographystyle{alpha}
\bibliography{normrev}
\end{document}